\definecolor{Xcolor}{HTML}{FF0000}%FF312E
\definecolor{Ycolor}{HTML}{00FFFF}
\newtheorem{theorem}{Theorem}
\newtheorem*{corollary}{Corollary}
\theoremstyle{definition}
\theoremstyle{remark}
\newcommand{\subalign}[1]{%
  \vcenter{%
    \Let@ \restore@math@cr \default@tag
    \baselineskip\fontdimen10 \scriptfont\tw@
    \advance\baselineskip\fontdimen12 \scriptfont\tw@
    \lineskip\thr@@\fontdimen8 \scriptfont\thr@@
    \lineskiplimit\lineskip
    \ialign{\hfil$\m@th\scriptstyle##$&$\m@th\scriptstyle{}##$\hfil\crcr
      #1\crcr
    }%
  }%
}
\DeclareMathOperator*{\argmax}{arg\,max}
\DeclareMathOperator*{\argmin}{arg\,min}
\DeclareMathOperator{\diam}{diam}
\DeclareMathOperator{\rad}{rad}
\DeclareMathOperator{\dis}{dis}
\DeclareMathOperator{\tr}{tr}
\DeclareMathOperator*{\defeq}{\buildrel \mathrm{def}\over =}
\DeclareMathOperator{\proj}{proj}
\newcommand{\XX}{\mathbf{X}}
\newcommand{\YY}{\mathbf{Y}}
\newcommand{\FF}{\mathbf{F}}
\newcommand{\GG}{\mathbf{G}}
\newcommand{\VV}{\mathbf{V}} % X__Y
\newcommand{\UU}{\mathbf{U}} % Y__X
\newcommand{\WW}{\mathbf{W}} % _Y_X
\newcommand{\R}{\mathbb{R}}
\newcommand{\RR}{\mathbf{R}}
\renewcommand{\SS}{\mathbf{S}}
\newcommand{\DD}{\mathbf{R}}
\newcommand{\HH}{\mathbf{H}}
\newcommand{\KK}{\mathbf{K}}
\newcommand{\RRR}{\mathcal{R}}
\newcommand{\SSS}{\mathcal{S}} % bi-mapping polytope
\newcommand{\dGH}{d_{\mathrm{GH}}}
\newcommand{\nconv}{\mathrm{nconv}}
\newcommand{\lmax}{\lambda_{\max}}
\newcommand{\dmax}{d_{\max}}
\newcommand{\pmax}{p_{\max}}
\newcommand{\obj}{\sigma_c}
\newcommand{\ch}{\mathrm{ch}}
\newcommand{\xx}{\mathbf{x}}
\newcommand{\yy}{\mathbf{y}}
\newcommand{\zz}{\mathbf{z}}
\newcommand{\grad}{\boldsymbol{\nabla}}
\title{Computing the Gromov--Hausdorff distance using gradient methods}
\author{Vladyslav Oles\thanks{\texttt{vlad.oles@proton.me}}}
\date{}
\begin{document}

\maketitle

\begin{abstract}
The Gromov--Hausdorff distance measures the difference in shape between metric spaces and poses a notoriously difficult problem in combinatorial optimization. We introduce its quadratic relaxation over a convex polytope whose solutions provably deliver the Gromov--Hausdorff distance. The optimality guarantee is enabled by the fact that the search space of our approach is not constrained to a generalization of bijections, unlike in other relaxations such as the Gromov--Wasserstein distance.

We suggest conditional gradient descent for solving the relaxation in $O(n^3)$-time per iteration, and demonstrate its performance on metric spaces of hundreds of points. In particular, we use it to obtain a new bound of the Gromov--Hausdorff distance between the unit circle and the unit hemisphere equipped with Euclidean metric. Our approach is implemented as a Python package \texttt{dgh}.
\end{abstract}

\section*{Notation}
\def\arraystretch{1.5}
\vspace{-3mm}
\begin{table}[H]
\begin{tabular}{rp{0.8\textwidth}}
$X \to Y$ & the set of mappings from set $X$ to set $Y$ \\
$d_X$ & metric on set $X$ \\
$\dGH(X, Y)$ & the Gromov--Hausdorff distance between metric spaces $X$ and $Y$\\
 $\mathbf{A}$ & finite matrix with entries $A_{ij}$\\
$c^\mathbf{A}$ & entrywise exponentiation with entries $c^{A_{ij}}$ \\
$\left\|\mathbf{A}\right\|_p$ & entrywise $p$-norm $\left(\sum_{i,j}\left|A_{ij}\right|^p\right)^{1/p}$\\
$\|\mathbf{A}\|_\infty$ & entrywise $\infty$-norm $\max_{i,j}\left|A_{ij}\right|$\\
$\overline{\mathbf{A}\mathbf{B}}$ & line segment $\left\{\alpha\mathbf{A} + (1-\alpha)\mathbf{B}: \alpha \in [0,1]\right\}$\\
\end{tabular}
\end{table}
\def\arraystretch{1}

\section{Introduction}
The Gromov--Hausdroff distance \cite{gromov1999metric} measures the difference in shape between geometric objects, and is a natural metric on the space of (isometry classes of) compact metric spaces. It is widely used in Riemannian geometry and appears in other areas of modern mathematics such as computational topology and graph theory (see e.g. \cite{gromov1999metric, latschev2001vietoris, chazal2009gromov, tuzhilin2020lectures}). Starting from the late 2000s, it has become increasingly popular in data science as a model for dissimilarity measures between shapes such as point clouds \cite{memoli2007use, chazal2009gromov, villar2016polynomial, bronstein2010gromov} and graphs \cite{lee2012persistent, chung2017topological, fehri2018characterizing}).

Formally, the Gromov--Hausdorff distance between a pair of compact metric spaces $X, Y$ minimizes the distortion of distances between the points of $X$ and $Y$ and their images under some bi-directional mapping pair $(f,g)\in(X\to Y)\times(Y\to X)$. Solving this combinatorial minimization is an NP-hard problem, and in fact even approximating it up to a multiplicative factor of $\sqrt{n}$ (where $n \defeq \max\{|X|, |Y|\}$) in the general case remains an intractable task \cite{schmiedl2017computational}. High computational cost of the distance has motivated its relaxations such as the modified Gromov--Hausdorff distance \cite{memoli2012some}, the Gromov--Wasserstein distance (which became a broadly used metric between metric measure spaces in its own right) \cite{memoli2011gromov}, and GHMatch \cite{villar2016polynomial}. Other studies have proposed more feasible algorithms for approximating the Gromov--Hausdorff distance for the special cases of subsets of $\R^1$ \cite{majhi2023approximating}, metric trees \cite{agarwal2018computing, touli2018fpt}, and ultrametric spaces \cite{memoli2021gromov}. 

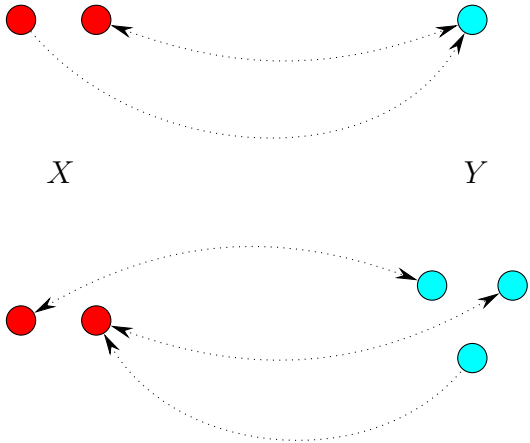
\begin{figure}[h]
    \centering
    \label{fig:nonbijective}
    \caption{An example with $|X| = |Y| = 4$ for which the mapping pairs minimizing the distortion are not bijective. The dotted lines represent one of such mapping pairs. The distances in $X$ and $Y$ take only two distinct values, whose ratio exceeds 2.}
    \vspace{.5cm}
    \begin{tikzpicture}
\node[circle,draw,fill=Xcolor] (X1) at (0,0){};
\node[circle,draw,fill=Xcolor,left of=X1] (X2){};
\node[circle,draw,fill=Xcolor,below of=X1,node distance=4cm] (X3){};
\node[circle,draw,fill=Xcolor,left of=X3] (X4){};
\node[below left=1.6cm and 0cm of X1]{$X$};

\node[circle,draw,fill=Ycolor,right of=X1,node distance=5cm] (Y1){};
\node[circle,draw,fill=Ycolor,below left=3.25cm and .25cm of Y1] (Y2){};
\node[circle,draw,fill=Ycolor,below right=3.25cm and .25cm of Y1] (Y3){};
\node[circle,draw,fill=Ycolor,below of=Y1,node distance=4.5cm] (Y4){};
\node[below right=1.6cm and 4.6cm of X1]{$Y$};

\newcommand{\arrow}{{Stealth[length=3mm, width=1.5mm]}}

\draw[dotted, \arrow-\arrow] (X1) to[out=-20,in=-160] (Y1);
\draw[dotted, -\arrow] (X2) to[out=-50,in=-120] (Y1);

\draw[dotted, \arrow-\arrow] (X4) to[out=30,in=160] (Y2);
\draw[dotted, \arrow-\arrow] (X3) to[out=-20,in=-150] (Y3);
\draw[dotted, \arrow-] (X3) to[out=-60,in=-130] (Y4);

\end{tikzpicture}
\end{figure}

While the Gromov--Wasserstein distance and GHMatch are similar to our proposed approach in leveraging gradient methods to obtain approximate solutions, both of these relaxations are constrained to some generalization of bijections between $X$ and $Y$.
% While the Gromov--Wasserstein distance is a broadly used metric between metric measure spaces in its own right, both of these relaxations are constrained to some generalization of bijections between $X$ and $Y$.
Assuming $|X|=|Y|$ (and, for the Gromov--Wasserstein distance, equal measure assigned to each point), they can retrieve only those solutions $(f,g)$ to the original minimization that satisfy $f = g^{-1}$. However, the existence of such solutions is not guaranteed, as is illustrated by the example from Figure \ref{fig:nonbijective}. In particular, the search space of the Gromov--Wasserstein distance between the depicted $X$ and $Y$ under the uniform probability measure does not contain any of the original solutions. Moreover, the example demonstrates that the minimum distortion over all mapping pairs (which is the ``short'' distance) can be arbitrarily smaller than the minimum distortion over bijections (which is the difference between the ``long'' and the ``short'' distances).
%all mapping pairs can be arbitrary small in comparison to its counterpart constrained to the set of bijections.
This is perhaps unsurprising considering that $(X\to Y)\times (Y\to X)$ grows super-exponentially faster than the set of bijections between $X$ and $Y$. % due to  $\frac{n^{2n}}{n!} \geq \left(\sqrt{2}n\right)^n$.
From a practical standpoint, such an expressivity of the Gromov--Hausdorff distance can be useful for aligning metric spaces of irregular density, e.g. those obtained by uneven sampling from the underlying objects.% or simply of distinct cardinality.

In the below Section \ref{relaxation}, we propose a parametrized relaxation of the Gromov--Hausdorff distance over the same search space $(X\to Y)\times (Y\to X)$. We derive a threshold value for the parameter ensuring that any solution to the relaxed problem minimizes the distortion and therefore delivers the Gromov--Hausdorff distance. Section \ref{minimization} describes solving the relaxation using conditional gradient descent and discusses the associated computational complexity and optimization landscape. We detail on our implementation and demonstrate its performance in numerical experiments in Section \ref{numerical}. As a byproduct, we tighten the upper bound on the Gromov--Hausdorff distance between a unit circle $S^1 \subset \R^2$ and a unit hemisphere $H^2 \subset \R^3$ (Section \ref{spheres}). For brevity, technical proofs are relegated to the appendix.% \ref{proofs}.

% The scope of this work is restricted to computing the GH distance between finite metric spaces.% as those appearing in the computational context.
% Such a capacity however would be sufficient for approximating the GH distance between any compact spaces with an arbitrary precision. This is because the GH distance between any pair of spaces is within $\epsilon$ from the GH distance between their finite $\epsilon$-nets, existent for any $\epsilon > 0$ \cite{oles2022lipschitz}. This approach is exemplified in Section \ref{} where we numerically bound the GH distance between continuous objects.

% Intuitively speaking, dGH tries to align distances in $X$ and $Y$ by mapping the two spaces into each other

% GHMatch in \cite{villar2016polynomial}:
% - its objective is p-norm relaxation of the original $\infty$-norm formulation (no guarantee of coinciding minima);
% - its search space ($\mathbf{y}$) is Birkhoff polytope which generalizes bijections and does not fully capture the space of all mapping pairs;
% - its iteration is $O(n^4)$

\section{Relaxing the Gromov--Hausdorff distance}
\label{relaxation}

\subsection{Matrix reformulation}
Let $X = \{x_1, \ldots, x_n\}$ and $Y = \{y_1, \ldots, y_m\}$ be metric spaces whose finite cardinalities satisfy $n \geq m > 1$, and let $\XX \in \R^{n\times n}$ and $\YY \in \R^{m\times m}$ denote their  distance matrices, e.g. $X_{ij}=d_X(x_i,x_j)$. Recall that the Gromov--Hausdorff distance can be formulated as a combinatorial minimization over the bi-directional mapping pairs $(X\to Y)\times(Y\to X)$
$$\dGH(X, Y) = \frac{1}{2}\min_{\substack{f:X\to Y,\\ g:Y\to X}} \dis\left(\left\{\left(x, f(x)\right): x \in X\right\} \cup \left\{\left(g(y), y\right): y \in Y\right\}\right),\quad\cite{kalton1999distances}$$ where the distortion of some relation $R \subseteq X \times Y$ is defined as the largest absolute difference in distances it incurs: $$\dis R \defeq \max_{\subalign{(x, y),(x', y') \in R}} \left|d_X(x, x') - d_Y(y, y')\right|.$$

For some $f:X\to Y$, consider its ``one-hot encoded'' matrix representation $\FF \in \{0, 1\}^{n \times m}$ s.t. $F_{ij} = \begin{cases}
    1, & f(x_i) = y_j\\
    0, & \text{otherwise}
\end{cases}$ and note that $\left(\FF\YY\FF^T\right)_{ij} = d_Y\left(f(x_i), f(x_j)\right)$. It follows that $$\max_{x,x' \in X} \Big|d_X(x, x') - d_Y\left(f(x), f(x')\right)\Big| = \big\|\XX - \FF\YY\FF^T\big\|_\infty.$$
From the analogous construction of $\GG \in \{0, 1\}^{m \times n}$ for an arbitrary $g: Y \to X$, $$\max_{y,y' \in Y} \left|d_X\left(g(y), g(y')\right) - d_Y(y, y')\right| = \big\|\YY - \GG\XX\GG^T\big\|_\infty.$$
Because $(\XX\GG^T)_{ij} = d_X\left(g(y_j), x_i\right)$ and $(\FF\YY)_{ij} = d_Y\left(f(x_i), y_j\right)$, $$\max_{x \in X, y \in Y} \left|d_X\left(x, g(y)\right) - d_Y\left(f(x), y\right)\right| = \big\|\XX\GG^T - \FF\YY\big\|_\infty.$$

Denote $R = R(f, g) \defeq \left\{\left(x, f(x)\right): x \in X\right\} \cup \left\{\left(g(y), y\right): y \in Y\right\}$. Any pair $(x, y), (x', y') \in R$ satisfies
\begin{align*}
\left|d_X(x, x') - d_Y(y, y')\right| \in \bigg\{&\left|d_X(x, x') - d_Y\left(f(x), f(x')\right)\right|,\\
&\left|d_X\left(g(y), g(y')\right) - d_Y(y, y')\right|,\\
&\left|d_X\left(x, g(y)\right) - d_Y\left(f(x), y\right)\right|,\\
&\left|d_X\left(x', g(y')\right) - d_Y\left(f(x'), y'\right)\right|\bigg\},
\end{align*}
% Arranging the three matrices inside $\|\cdot\|_\infty$ as blocks of an $(n+m)\times(n+m)$ matrix gives
and therefore
\begin{align*}
    \dis R &= \max\bigg\{\begin{aligned}[t]
        &\max_{x,x' \in X} \left|d_X(x, x') - d_Y\left(f(x), f(x')\right)\right|, \\ &\max_{y,y' \in Y} \left|d_X\left(g(y), g(y')\right) - d_Y(y, y')\right|,\\ &\max_{x \in X, y \in Y} \left|d_X\left(x, g(y)\right) - d_Y\left(f(x), y\right)\right|\bigg\}
    \end{aligned}  \\
    &= \left\|\begin{bmatrix}\XX - \FF\YY\FF^T & \XX\GG^T - \FF\YY \\ \GG\XX - \YY\FF^T & \YY - \GG\XX\GG^T\end{bmatrix}\right\|_\infty \\
    &= \big\|\VV - \RR\UU\RR^T + \RR\WW - \WW^T\RR^T\big\|_\infty,
\end{align*}
where $\RR \defeq \begin{bmatrix}\FF & \\ &\GG\end{bmatrix}$ is the matrix representation of $R$ and $\VV \defeq \begin{bmatrix}\XX & \\ &\YY\end{bmatrix}$, $\UU \defeq \begin{bmatrix}\YY & \\ &\XX\end{bmatrix}$, $\WW \defeq \begin{bmatrix}& \YY \\ \XX&\end{bmatrix}$. The presence of a redundant block $\GG\XX-\YY\FF^T$ in the distance difference matrix $\VV - \RR\UU\RR^T + \RR\WW - \WW^T\RR^T$ is motivated by matrix symmetry.

By construction, $\RR \in \{0, 1\}^{(n+m)\times(n+m)}$ is row-stochastic and has $m\times m$ and $n\times n$ blocks of zeros in the upper right and lower left, respectively. Let $\RRR \subset \{0, 1\}^{(n+m)\times(n+m)}$ denote the set of all such matrices, which is in a 1-to-1 correspondence with the bi-directional mapping pairs $(X\to Y)\times(Y\to X)$. We will write $\dis \RR \defeq \left\|\VV - \RR\UU\RR^T + \RR\WW - \WW^T\RR^T\right\|_\infty$ to denote the matrix-based formulation of distortion, and assume that the distinction between $\dis:\RRR\to \R$ and $\dis:\mathscr{P}(X\times Y)\to\R$ is clear from the context. Then an equivalent formulation of the Gromov--Hausdorff distance can be given as
\begin{equation*}
    \label{eqn:matrix_reformulation}
    \dGH(X, Y) = \frac{1}{2}\min_{\RR \in \RRR} \dis \RR.
    \tag{$\star$}
\end{equation*}

\subsection{Relaxing the objective}
The $\infty$-norm in $\dis \RR$ deprives the (otherwise quadratic) objective of (\ref{eqn:matrix_reformulation}) of differentiability. A standard trope in smooth relaxations of the maximum function is to involve the sum of exponents of its arguments. It turns out that, for a sufficiently large exponentiation base, this yields a smooth relaxation of (\ref{eqn:matrix_reformulation}) that is minimized only by solutions to (\ref{eqn:matrix_reformulation}). By construction, such a relaxation favors solutions aligning the distances between $X$ and $Y$ better on average (in addition to minimizing their largest difference), which may be desirable in applications.

Let $\Delta \defeq \left\{\left|d_X(x, x') - d_Y(y, y')\right|: x, x' \in X, y, y' \in Y\right\}$, and note that $\dis \RR \in \Delta$ for any $\RR \in \RRR$.
% \begin{definition}
The \textit{distortion gap} between $X$ and $Y$ %for a pair of finite metric spaces $X, Y$
is then defined as $$\rho = \rho(X, Y) \defeq \min \left\{|\delta - \delta'|: \delta, \delta' \in \Delta, \delta \neq \delta'\right\}.$$ Trivially, $\dis \RR \neq \dis \RR'$ for some $\RR, \RR' \in \RRR$ implies that $|\dis \RR - \dis \RR'| \geq \rho$, which provides justification for the name.
% \end{definition}

\begin{restatable}{theorem}{cthreshold}
    \label{thm:c_threshold}
    Let $c \geq \left(\frac{(n+m)^2-n-m}{2}\right)^{1/\rho}$. Then
    \begin{align*}
    \argmin_{\RR\in\RRR}\left\|c^{\VV - \RR\UU\RR^T + \RR\WW - \WW^T\RR^T} + c^{\RR\UU\RR^T - \VV + \WW^T\RR^T - \RR\WW}\right\|_1 \subseteq \argmin_{\RR\in\RRR}\dis\RR,
    \end{align*}
    where the exponentials are taken entrywise.

\end{restatable}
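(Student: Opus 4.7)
The plan is to prove the contrapositive: if $\RR^*$ fails to minimize $\dis$, then it is strictly beaten in the smoothed objective by some minimizer of $\dis$. Writing $M(\RR) \defeq \VV - \RR\UU\RR^T + \RR\WW - \WW^T\RR^T$ and $h(a) \defeq c^a + c^{-a}$, the objective unfolds as $J(\RR) = \sum_{i,j} h(M_{ij}(\RR))$. I would first establish three structural properties of $M$: its diagonal vanishes (since $\XX$, $\YY$, $\FF\YY\FF^T$, $\GG\XX\GG^T$ are all zero on their diagonals), its entries satisfy $|M_{ij}| = |M_{ji}|$ (the diagonal blocks are symmetric while the off-diagonal blocks are anti-symmetric), and $\max_{i,j}|M_{ij}| = \|M\|_\infty = \dis\RR$. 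Since $h$ is even, the objective collapses to
\begin{equation*}
J(\RR) = 2(n+m) + 2G(\RR), \qquad G(\RR) \defeq \sum_{1 \leq i < j \leq n+m} h(M_{ij}(\RR)),
\end{equation*}
a sum of exactly $K \defeq \binom{n+m}{2} = \frac{(n+m)^2 - n - m}{2}$ terms, each at least $h(0) = 2$.

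Next I would fix $\RR_0 \in \argmin_{\RRR} \dis$ and consider arbitrary $\RR^*$ with $a^* \defeq \dis\RR^* > a_0 \defeq \dis\RR_0$. Since $a_0, a^* \in \Delta$, the distortion-gap definition forces $a^* \geq a_0 + \rho$. Sandwiching $G$ with a tight lower bound (isolating the pair that attains $|M_{ij}(\RR^*)| = a^*$) and a crude upper bound yields
\begin{equation*}
G(\RR^*) - G(\RR_0) \geq h(a^*) + 2(K-1) - K h(a_0).
\end{equation*}
Invoking the hypothesis $c^\rho \geq K$ via $h(a^*) \geq c^{a^*} \geq c^\rho c^{a_0} \geq K c^{a_0}$ and expanding $h(a_0) = c^{a_0} + c^{-a_0}$ reduces the task to showing $2(K-1) > K c^{-a_0}$. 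A case split on $a_0$ finishes it: if $a_0 = 0$ the inequality becomes $K > 2$, and if $a_0 \geq \rho$ (using $0 \in \Delta$ to see the smallest positive element of $\Delta$ is at least $\rho$) then $K c^{-a_0} \leq K/c^\rho \leq 1$, so $K \geq 2$ suffices. The standing assumption $n \geq m > 1$ yields $K \geq \binom{4}{2} = 6$, so both branches close strictly.

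The hard part will be producing the constant $K$ rather than the more natural $2K$ that a careless estimate would give. A naive lower bound $G(\RR^*) \geq h(a^*)$ discards the $2(K-1)$ contribution of the other off-diagonal pairs and then forces $c^\rho > 2K$ in order to dominate the uniform envelope $h(a_0) \leq 2c^{a_0}$. Retaining that additive $2(K-1)$ in the lower bound is precisely what absorbs the ``tail'' term $K c^{-a_0}$ left over from the expansion $h(a_0) = c^{a_0} + c^{-a_0}$, and is what permits the sharper threshold $c \geq K^{1/\rho}$ advertised in the statement.
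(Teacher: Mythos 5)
Your proposal is correct and follows essentially the same route as the paper's proof: the same decomposition (zero diagonal, entries paired by $|M_{ij}|=|M_{ji}|$), the same lower bound isolating the entries attaining $\dis\RR^*$ against a floor of $h(0)=2$ for the rest, the same per-entry upper bound $h(\dis\RR_0)$ on the comparison matrix, and the same use of $c^\rho \ge \frac{(n+m)^2-n-m}{2}$; the paper merely phrases it as strict monotonicity of the relaxed objective in $\dis\RR$ rather than as a contrapositive. Your closing case split on $a_0$ is unnecessary: the uniform bound $c^{-a_0}\le 1$ already reduces the target inequality to $2(K-1)>K$, which holds since $K\ge 6$, and this is exactly the slack the paper exploits.
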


Theorem \ref{thm:c_threshold} is based on the idea that a decrease in $\left\|\VV - \RR\UU\RR^T + \RR\WW - \WW^T\RR^T\right\|_\infty$, the largest magnitude in the distance difference matrix, must also mean a decrease in its relaxation $\left\|c^{\VV - \RR\UU\RR^T + \RR\WW - \WW^T\RR^T} + c^{\RR\UU\RR^T - \VV + \WW^T\RR^T - \RR\WW}\right\|_1$ even when the magnitudes of all other distance differences increase from zero to the new maximum. At the same time, there is little reason to expect better distance alignment from suboptimal choices of $\RR \in \RRR$ than from solutions to (\ref{eqn:matrix_reformulation}),
% think that suboptimal choices of $\RR \in \RRR$ will align the distances between $X$ and $Y$ better than solutions to (\ref{eqn:matrix_reformulation}),
and in practice much smaller values of $c$ than $\left(\frac{(n+m)^2-n-m}{2}\right)^{1/\rho}$ can satisfy the statement of Theorem \ref{thm:c_threshold}.
% In practice, however, suboptimal choices of $\RR \in \RRR$ do not tend to align the distances in $X$ and $Y$ better than solutions to (\ref{eqn:matrix_reformulation}), and therefore much smaller values of $c$ than $\left(\frac{(n+m)^2-n-m}{2}\right)^{1/\rho}$ can satisfy the statement of Theorem \ref{thm:c_threshold}.

Note that $c$ and $c^{-1}$ behave identically in the 1-norm relaxation, which means that both $c \in (0, 1]$ and $c \in [1, \infty)$ can be considered for analogous results. For simplicity, we focus on the latter option and assume $c \geq 1$ throughout this work.

Recall that the two components of the distance difference matrix, $\VV - \RR\UU\RR^T$ and $\RR\WW - \WW^T\RR^T$, have complementary block sparsity: the former contains $n\times m$ zeros in the upper right and $m\times n$ zeros in the lower left, while the latter has $n\times n$ zeros in the upper left and $m\times m$ zeros in the lower right. It follows that
\begin{align*}
    &\left\|c^{\VV - \RR\UU\RR^T + \RR\WW - \WW^T\RR^T} + c^{\RR\UU\RR^T - \VV + \WW^T\RR^T - \RR\WW}\right\|_1 \\ &\hspace{2cm}= \left\|c^{\VV - \RR\UU\RR^T} + c^{\RR\WW - \WW^T\RR^T} - c^\mathbf{0} + c^{\RR\UU\RR^T - \VV} + c^{\WW^T\RR^T - \RR\WW} - c^\mathbf{0}\right\|_1 \\ &\hspace{2cm}= \left\|c^{\VV - \RR\UU\RR^T}\right\|_1 + \left\|c^{\RR\UU\RR^T - \VV}\right\|_1 + \left\|c^{\RR\WW - \WW^T\RR^T}\right\|_1 + \left\|c^{\WW^T\RR^T - \RR\WW}\right\|_1 - 2(n+m)^2.
\end{align*}

Leveraging the structure of $\RR$ and subsequently applying the trace trick gives
\begin{align*}
&\left\|c^{\VV - \RR\UU\RR^T}\right\|_1 = \left\langle c^{\VV}, c^{-\RR\UU\RR^T}\right\rangle = \left\langle c^{\VV}, \RR c^{-\UU}\RR^T\right\rangle = \left\langle \RR , c^{\VV}\RR c^{-\UU}\right\rangle
\end{align*}
and
\begin{align*}
&\left\|c^{\RR\WW - \WW^T\RR^T}\right\|_1 = \left\langle c^{\RR\WW}, c^{-\WW^T\RR^T}\right\rangle = \left\langle \RR c^{\WW},
\big(\RR c^{-\WW}\big)^T\right\rangle = \left\langle \RR , \big(c^{\WW}\RR c^{-\WW}\big)^T\right\rangle,
\end{align*}
as well as $\left\|c^{\RR\UU\RR^T - \VV}\right\|_1 = \Big\langle \RR, c^{-\VV}\RR c^{\UU}\Big\rangle$ and $\left\|c^{\WW^T\RR^T - \RR\WW}\right\|_1 = \left\langle \RR , \left(c^{-\WW}\RR c^{\WW}\right)^T\right\rangle$. Combining the four equations casts the 1-norm relaxation of (\ref{eqn:matrix_reformulation}) as a quadratic minimization
%with optimality guarantees on its solutions for large enough $c$:
\begin{align*}
    \label{eqn:quadratic_objective}
    \min_{\RR \in \RRR} \obj(\RR) \defeq \left\langle \RR, c^{\VV}\RR c^{-\UU} + c^{-\VV}\RR c^{\UU} + \big(c^{\WW}\RR c^{-\WW} + c^{-\WW}\RR c^{\WW}\big)^T\right\rangle,
    \tag{$\star\star$}
\end{align*}
which concludes the proof of the following
\begin{corollary}[to Theorem \ref{thm:c_threshold}]
    Let $c \geq \left(\frac{(n+m)^2-n-m}{2}\right)^{1/\rho}$. Then
    \begin{align*}
    \argmin_{\RR\in\RRR} \obj(\RR) \subseteq \argmin_{\RR\in\RRR}\dis\RR.
    \end{align*}
\end{corollary}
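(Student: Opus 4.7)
Plan:

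Let $\phi(t) \defeq c^t + c^{-t}$, which is even, satisfies $\phi(t) \geq \phi(0) = 2$, and is strictly increasing in $|t|$ for $c > 1$. Writing $\mathbf{M}(\RR) \defeq \VV - \RR\UU\RR^T + \RR\WW - \WW^T\RR^T$ for the distance-difference matrix, the $1$-norm objective in the theorem equals $\Phi(\RR) \defeq \sum_{i,j}\phi\bigl(M_{ij}(\RR)\bigr)$. I would argue by contrapositive: fix some $\RR^* \in \argmin_{\RR\in\RRR}\dis\RR$ with $\delta^* \defeq \dis\RR^*$, and for any $\RR \in \RRR$ with $\dis\RR > \delta^*$ show $\Phi(\RR) > \Phi(\RR^*)$.

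The first ingredient is a structural symmetry of $\mathbf{M}(\RR)$: the piece $\VV - \RR\UU\RR^T$ is block diagonal and symmetric, while $\RR\WW - \WW^T\RR^T$ is block antidiagonal and antisymmetric, so the two pieces occupy disjoint block positions. Consequently $|M_{ij}(\RR)| = |M_{ji}(\RR)|$ for every $(i,j)$; in particular the diagonal of $\mathbf{M}(\RR)$ vanishes, and its $N \defeq (n+m)^2 - (n+m)$ off-diagonal entries pair up into $N/2$ matched magnitudes.

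The second ingredient invokes the distortion gap: since $\dis\RR \in \Delta$, the hypothesis $\dis\RR > \delta^*$ sharpens to $\dis\RR \geq \delta^* + \rho$, and by the pairing above at least two off-diagonal entries of $\mathbf{M}(\RR)$ carry magnitude $\geq \delta^* + \rho$. Using $\phi \geq 2$ everywhere and monotonicity of $\phi$ in $|\cdot|$, I would then obtain the two-sided estimate
\begin{align*}
\Phi(\RR) &\geq 2(n+m) + 2\phi(\delta^* + \rho) + 2(N - 2), \\
\Phi(\RR^*) &\leq 2(n+m) + N\phi(\delta^*).
\end{align*}

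The final step is bookkeeping. Subtracting yields $\Phi(\RR) - \Phi(\RR^*) \geq 2\phi(\delta^* + \rho) - N(\phi(\delta^*) - 2) - 4$; bounding $\phi(\delta^* + \rho) \geq c^{\delta^* + \rho}$ and $\phi(\delta^*) - 2 \leq c^{\delta^*} - 1$ (both valid since $c \geq 1$ and $\delta^* \geq 0$), it suffices to verify $c^{\delta^*}(2c^\rho - N) > 4 - N$. Under the hypothesis $c^\rho \geq N/2$ the left side is nonnegative, while the standing assumption $n \geq m > 1$ forces $N \geq 12$ and hence the right side is at most $-8$, closing the argument. The delicate point is precisely this balancing: the single pair whose magnitude jumps by $\rho$ must outweigh the other $N - 2$ off-diagonal slots that could simultaneously rise from $0$ up to $\delta^*$, and the threshold $N/2$ on $c^\rho$ is calibrated exactly for this.
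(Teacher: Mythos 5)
Your core estimate is sound, but it proves the wrong statement: what you bound is the $1$-norm relaxation $\Phi(\RR)=\big\|c^{\mathbf{M}(\RR)}+c^{-\mathbf{M}(\RR)}\big\|_1$ with $\mathbf{M}(\RR)=\VV-\RR\UU\RR^T+\RR\WW-\WW^T\RR^T$, and your two-sided bookkeeping (diagonal contributes $2(n+m)$, the mirrored pair of maximal entries jumps by at least $\rho$, the remaining $N-2$ slots are squeezed between $\phi(0)=2$ and $\phi(\delta^*)$) is correct and is essentially the paper's own proof of Theorem \ref{thm:c_threshold}, just with slightly tighter accounting. But the corollary is about the quadratic objective $\obj(\RR)=\left\langle \RR, c^{\VV}\RR c^{-\UU} + c^{-\VV}\RR c^{\UU} + \big(c^{\WW}\RR c^{-\WW} + c^{-\WW}\RR c^{\WW}\big)^T\right\rangle$, which your argument never mentions. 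Showing $\argmin_{\RR\in\RRR}\Phi(\RR)\subseteq\argmin_{\RR\in\RRR}\dis\RR$ says nothing about $\argmin_{\RR\in\RRR}\obj(\RR)$ until you prove that the two argmin sets coincide.

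That bridge is the actual content of the corollary, and it is missing. One must show $\obj(\RR)=\Phi(\RR)+2(n+m)^2$ for every $\RR\in\RRR$, which uses three facts: (i) the complementary block sparsity of $\VV-\RR\UU\RR^T$ and $\RR\WW-\WW^T\RR^T$, so that the entrywise exponential of their sum splits as $c^{\VV-\RR\UU\RR^T}+c^{\RR\WW-\WW^T\RR^T}-c^{\mathbf{0}}$ (and likewise for the negated exponent), contributing the constant $-2(n+m)^2$; (ii) the identities $c^{-\RR\UU\RR^T}=\RR c^{-\UU}\RR^T$ and $c^{\RR\WW}=\RR c^{\WW}$, which hold only because $\RR$ is a $\{0,1\}$ row-stochastic mapping-pair matrix (they fail on general points of $\SSS$, so this step cannot be waved through); and (iii) the trace trick converting each resulting $1$-norm into an inner product, e.g. $\big\|c^{\VV-\RR\UU\RR^T}\big\|_1=\left\langle \RR, c^{\VV}\RR c^{-\UU}\right\rangle$, whose sum is exactly $\obj(\RR)$. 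With that identity in hand, the corollary follows immediately from Theorem \ref{thm:c_threshold} (or from your re-derivation of it); without it, your proposal leaves the statement unproved.
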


\subsection{Relaxing the domain}
In order to enable first-order methods for solving (\ref{eqn:quadratic_objective}), its objective $\obj$ needs to be considered over a continuous domain. A common approach in combinatorial optimization is relaxing the discrete domain to its convex hull, employing a gradient-based algorithm on the convex region, and projecting the resulting solution back onto the original domain.
% The quadratic formulation enables first-order methods to find local minima if $\obj$ is extended to a continuous domain. A common practice is to relax a discrete domain to its convex hull, and to project the found solution back according to some choice of matrix distance.
The convex hull of $\RRR$, henceworth denoted as $\SSS$, is the set of all $(n+m)\times(n+m)$ row-stochastic matrices with $m\times m$ and $n\times n$ blocks of zeros in the upper right and lower left, respectively.
% the so-called $\SSS$ comprised of $(n+m)\times(n+m)$ row-stochastic matrices with, referred to as the \textit{bi-mapping polytope} .
This is a direct consequence of generalizing the Birkhoff--von Neumann theorem to the row-stochastic matrices, which was done e.g. in \cite{gubin2008subgraph} and \cite{cao2022centrosymmetric}. Because the points of $\SSS$ represent bi-directional pairs of generalized (or ``soft'') mappings, we refer to it as the \textit{bi-mapping polytope}. Searching over the bi-mapping polytope $\SSS$ is a key difference between our approach and other relaxations of the Gromov--Hausdorff distance, which are constrained to the Birkhoff polytope. %The resulting continuous relaxation %(\ref{eqn:continuous_optimization})
%\begin{equation*}
%     \label{eqn:continuous_optimization}
%     \min_{\SS \in \SSS} \obj(\SS) \defeq \left\langle \SS, c^{\VV}\SS c^{-\UU} + c^{-\VV}\SS c^{\UU} + \big(c^{\WW}\SS c^{-\WW} + c^{-\WW}\SS c^{\WW}\big)^T\right\rangle
%     \tag{$\star\star$$\star$}
% \end{equation*}
%along with the optimality guarantees for its solutions
%whose solutions provably deliver the Gromov--Hausdorff distance (see Theorem \ref{thm:guarantees} below) is the main theoretical result of this work.

We will now show that projecting a solution to the resulting continuous relaxation
\begin{equation*}
    \label{eqn:continuous_optimization}
    \min_{\SS \in \SSS} \obj(\SS) \defeq \left\langle \SS, c^{\VV}\SS c^{-\UU} + c^{-\VV}\SS c^{\UU} + \big(c^{\WW}\SS c^{-\WW} + c^{-\WW}\SS c^{\WW}\big)^T\right\rangle
    \tag{$\star\star$$\star$}
\end{equation*}
% We will now show that projecting a solution to the continuous relaxation 
% $\min_{\SS \in \SSS}\obj(\SS)$
back onto the vertices $\RRR$ always yields a solution to (\ref{eqn:quadratic_objective}). Observe that the faces of $\SSS$ can be characterized similarly to those of the $(n+m)$-th Birkhoff polytope, see e.g. \cite{paffenholz2013faces}. Specifically, every face of $\SSS$ corresponds to a (possibly empty) set of forbidden assignments between the points of $X$ and $Y$ and is comprised of the convex combinations of all the compliant mapping pairs. We write it formally as
\begin{restatable}{lemma}{polytopegeometry}
    \label{lem:polytope_geometry}
    Any face $\Phi$ of the bi-mapping polytope $\SSS$ is characterized by an index set $(\mathcal{I}, \mathcal{J}) \subset \{1, \ldots, n+m\}\times\{1,\ldots,n+m\}$ s.t. $\Phi = \left\{\SS \in \SSS: S_{ij} = 0 \quad\forall (i,j)\in (\mathcal{I}, \mathcal{J})\right\}$.
\end{restatable}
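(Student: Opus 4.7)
The plan is to present $\SSS$ as an ordinary polytope in inequality form and invoke the classical correspondence between faces and tight sub-collections of its defining inequalities. Concretely, $\SSS \subset \R^{(n+m)\times(n+m)}$ is the intersection of the non-negative orthant $\{S_{ij} \geq 0\}$ with the affine subspace cut out by the row-sum equalities $\sum_j S_{ij} = 1$ for each $i$ together with $S_{ij} = 0$ for every $(i,j)$ in the two forbidden blocks. Since only the non-negativity constraints are inequalities, every non-empty face of $\SSS$ arises by additionally turning some subset of them into equalities, i.e.\ by forcing $S_{ij} = 0$ on some index set.

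Given a face $\Phi$, the concrete recipe is to let $(\mathcal{I}, \mathcal{J}) \defeq \{(i,j) : S_{ij} = 0 \text{ for every } \SS \in \Phi\}$ and then verify the two inclusions. One direction, $\Phi \subseteq \Phi' \defeq \{\SS \in \SSS : S_{ij} = 0 \ \forall (i,j) \in (\mathcal{I}, \mathcal{J})\}$, is immediate from the definition of $(\mathcal{I}, \mathcal{J})$. For the reverse inclusion, the classical face characterization supplies \emph{some} index set $(\mathcal{I}_0, \mathcal{J}_0)$ for which $\Phi = \{\SS \in \SSS : S_{ij} = 0 \ \forall (i,j) \in (\mathcal{I}_0, \mathcal{J}_0)\}$; since $(\mathcal{I}, \mathcal{J})$ is maximal among such sets, one has $(\mathcal{I}_0, \mathcal{J}_0) \subseteq (\mathcal{I}, \mathcal{J})$, so imposing the larger family of zero constraints yields a subset, giving $\Phi' \subseteq \Phi$ and hence $\Phi = \Phi'$.

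I do not expect a real obstacle. Once $\SSS$ is seen as a polytope cut out by row-sum equalities, fixed-block-zero equalities, and non-negativity inequalities, the statement is a direct instance of the standard dictionary between faces of a polytope and the subsets of its defining inequalities made tight. The diagonal $(\FF, \GG)$-pattern of $\RR$ plays no role beyond contributing more equality constraints to the ambient affine subspace, and the lemma's allowance for $(\mathcal{I}, \mathcal{J})$ to be any subset of $\{1, \ldots, n+m\}\times\{1, \ldots, n+m\}$ is consistent with our construction: positions already in the forbidden blocks may be freely added to or omitted from $(\mathcal{I}, \mathcal{J})$ without altering $\Phi'$.
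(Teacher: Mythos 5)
Your proposal is correct and follows essentially the same route as the paper: both describe $\SSS$ by its row-sum equalities, forbidden-block equalities, and non-negativity inequalities, and then invoke the standard correspondence between faces and the subsets of non-negativity constraints made tight (the paper phrases this as faces being intersections of the facets lying in the hyperplanes $S_{ij}=0$). Your extra verification via the maximal common-zero index set is just a slightly more explicit rendering of the same argument.
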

We can now show that any face of $\SSS$ containing a solution on its interior must be a part of the solution set, which in particular would imply $\argmin_{\RR \in \RRR}\obj(\RR) \subseteq \argmin_{\SS \in \SSS}\obj(\SS)$.
\begin{restatable}{lemma}{solutiononinterior}
\label{lem:solution_on_interior}
Let $\SS^* \in \argmin_{\SS \in \SSS}\obj(\SS)$. If $\SS^* \in \Phi \setminus \partial\Phi$ for some face $\Phi$, then $$\Phi \subseteq \argmin_{\SS \in \SSS}\obj(\SS).$$
\end{restatable}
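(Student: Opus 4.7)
My plan is to show that $\SS^*$ must be a vertex of $\SSS$, so that the face $\Phi$ containing $\SS^*$ in its relative interior is the singleton $\{\SS^*\}$ and the conclusion $\{\SS^*\} \subseteq \argmin_{\SS \in \SSS} \obj(\SS)$ is immediate.

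I first write $\obj(\SS) = F(\SS, \SS)$ for the symmetric bilinear form
\[
F(\SS, \mathbf{T}) = \bigl\langle \SS,\, c^\VV \mathbf{T} c^{-\UU} + c^{-\VV} \mathbf{T} c^\UU + (c^\WW \mathbf{T} c^{-\WW} + c^{-\WW} \mathbf{T} c^\WW)^T \bigr\rangle,
\]
whose symmetry in $(\SS, \mathbf{T})$ requires a brief verification using the symmetry of $c^{\pm\VV}, c^{\pm\UU}$ and trace cyclicity, plus the observation that while neither $\WW$-term is individually symmetric in $(\SS,\mathbf{T})$, their sum is. Since $\SS^*$ minimizes $\obj$ on $\SSS$ and lies in the relative interior of $\Phi$, every tangent direction $\mathbf{D}$ to $\Phi$ at $\SS^*$ admits both $\pm\mathbf{D}$ as feasible perturbations, so first-order optimality forces $F(\SS^*, \mathbf{D}) = 0$, and for any $\SS = \SS^* + \mathbf{D} \in \Phi$ the quadratic expansion gives
\[
\obj(\SS) = \obj(\SS^*) + F(\mathbf{D}, \mathbf{D}),
\]
with $F(\mathbf{D}, \mathbf{D}) \ge 0$ from global minimality.

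To produce the matching upper bound I consider single-row-swap directions $\mathbf{D} = e_{i, j_1} - e_{i, j_2}$ built from any two positive entries $S^*_{i, j_1}, S^*_{i, j_2}$ in the same row of $\SS^*$; such $\mathbf{D}$ is tangent to $\Phi$ at $\SS^*$ because both positions are unforbidden on $\Phi$. The complementary zero-block structures of $\SSS$ (zero blocks in upper-right and lower-left) and $\WW$ (zero blocks in upper-left and lower-right) make the entries $(c^{\pm\WW})_{j_k, i}$ appearing in the two $\WW$-contributions to $F(\mathbf{D}, \mathbf{D})$ all equal to $1$, so those contributions vanish. The surviving $\VV,\UU$-terms collapse via $V_{ii} = 0$ to $4 - 2\bigl(c^{U_{j_1 j_2}} + c^{-U_{j_1 j_2}}\bigr)$, which is $\le 0$ by AM--GM on $c^a + c^{-a} \ge 2$ and is strictly negative whenever $c > 1$ and $U_{j_1 j_2} > 0$ (the latter holding because $y_{j_1}, y_{j_2}$ are distinct points of a metric space, or analogously for rows lying in the other block). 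Combined with $F(\mathbf{D}, \mathbf{D}) \ge 0$, this is a contradiction, so $\SS^*$ cannot have two positive entries in the same row; row-stochasticity and the block structure of $\SSS$ then force $\SS^* \in \RRR$.

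The main obstacle is the bookkeeping for the four $\VV,\UU,\WW$-contributions to $F(\mathbf{D}, \mathbf{D})$: carefully tracking the row- and column-block partitions of the $(n+m) \times (n+m)$ matrices, verifying that the relevant $\WW$-entries do lie in $\WW$'s zero blocks (the delicate single step), and summing the four terms correctly. Once this is done, AM--GM and the minimality bound close the argument cleanly.
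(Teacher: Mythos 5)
Your route is genuinely different from the paper's and, for $c>1$, it is sound --- in fact it proves something stronger than the lemma. Your computation is correct: for a swap direction $\mathbf{D}$ built from two positive entries $S^*_{hk},S^*_{hl}$ in a permitted row block, the $\VV,\UU$-part of $F(\mathbf{D},\mathbf{D})$ collapses to $4-2\bigl(c^{U_{kl}}+c^{-U_{kl}}\bigr)$ with $U_{kl}=d_Y(y_k,y_l)>0$ or $d_X(\cdot,\cdot)>0$, and the $\WW$-part equals $2\bigl(2-c^{W_{kh}-W_{lh}}-c^{-(W_{kh}-W_{lh})}\bigr)$, which vanishes by your zero-block observation (indices $(k,h)$ land in the zero blocks of $\WW$ in both row cases); note the $\WW$-part is $\leq 0$ even without that observation, so the ``delicate step'' is not needed for the sign. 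Hence $F(\mathbf{D},\mathbf{D})<0$ for $c>1$, contradicting second-order optimality along the feasible segment, so every global minimizer is a vertex and the lemma becomes immediate. The paper instead accepts the fractional minimizer, derives equality of the relevant gradient entries from four inequalities, and propagates constancy of the quadratic over the whole face by induction on its faces; your argument shows that for $c>1$ the non-vertex case of that hypothesis is actually vacuous (equivalently, the paper's inequalities (3)--(4) are strict when $c>1$), which the paper does not exploit. What the paper's proof buys is uniformity in $c$: it works verbatim for every $c\geq 1$.

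Two small patches are needed. First, the paper assumes only $c\geq 1$, and at $c=1$ your key claim fails: $\obj$ is constant on $\SSS$ (every point is a minimizer, including non-vertices), so the reduction ``minimizer $\Rightarrow$ vertex'' is false there; you must treat $c=1$ separately, where the lemma holds trivially because $\obj\equiv 4(n+m)^2$ on $\SSS$. Second, add the one-line justification that if $\SS^*$ is a vertex and $\SS^*\in\Phi\setminus\partial\Phi$, then $\Phi=\{\SS^*\}$: a vertex of $\SSS$ lying in a face $\Phi$ is an extreme point of $\Phi$, and extreme points of a positive-dimensional face lie on its relative boundary. With these additions your proof is complete.
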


% Note that Theorem \ref{thm:solution_on_interior} implies $\argmin_{\RR \in \RRR}\obj(\RR) \subseteq \argmin_{\SS \in \SSS}\obj(\SS)$.
% the solution set of (\ref{eqn:continuous_optimization}) always contains a vertex, and therefore 
% $\obj$ attains its minimum on the vertices of $\SSS$.
%(and, as a consequence, that relaxing $\RRR$ to its convex hull $\SSS$ does not change the minimum of $\obj$).
% In order to guarantee that the projection of $\argmin_{\SS \in \SSS} \obj(\SS)$ onto $\RRR$ is included in $\argmin_{\RR \in \RRR} \obj(\RR)$, it remains to show that the nearest vertices to any $\SS \in\SSS$ belong to the same faces as $\SS$.
Moreover, the characterization from Lemma \ref{lem:polytope_geometry} can be used to establish that the nearest vertex (or vertices) to any point in $\SSS$ must belong to the same faces as the point itself. Let $\proj_\RRR: \SSS \to \RRR$ denote some projection of the bi-mapping polytope onto the set of its vertices (note that the ambiguity in defining $\proj_\RRR \SS \in \argmin_{\RR \in \RRR}\|\SS - \RR\|_2$ is due to the non-convexity of $\RRR$).% Then the nearest vertex (or vertices) of any $\SS \in\SSS$ belongs to the same faces as $\SS$ does.
%for any face $\Phi$ and point $\SS \in \Phi$, all the nearest vertices to $\SS$ are in $\Phi$.
\begin{restatable}{lemma}{voronoiface}
    \label{lem:voronoi_face}
    Let $\Phi$ be a face of $\SSS$. If $\SS \in \Phi$, then $\proj_\RRR \SS \in \Phi$.
    % \argmin_{\RR \in \RRR}\|\SS - \RR\|_2 \subseteq
\end{restatable}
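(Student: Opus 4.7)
The plan is to invoke Lemma~\ref{lem:polytope_geometry} to reduce the claim to an entrywise statement, then exploit the row-wise decoupling of both the projection objective and the set $\RRR$. By that lemma, $\Phi = \{\SS \in \SSS : S_{ij} = 0 \ \forall (i,j) \in (\mathcal{I}, \mathcal{J})\}$ for some index set $(\mathcal{I}, \mathcal{J})$, so it suffices to show that $(\proj_\RRR \SS)_{ij} = 0$ for every $(i,j) \in (\mathcal{I}, \mathcal{J})$.

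Next, I would observe that $\|\SS - \RR\|_2^2 = \sum_i \|\SS_{i,:} - \RR_{i,:}\|_2^2$ decomposes across rows, and the constraint $\RR \in \RRR$ is itself a row-wise condition: each row of $\RR$ is an independent one-hot vector supported on a fixed block-allowed column range (determined by whether the row index falls in $\{1,\ldots,n\}$ or $\{n+1,\ldots,n+m\}$). Hence the row-$i$ subproblem is to choose a block-allowed column $k$ maximizing $S_{ik}$ and set $R_{ik} = 1$ with zeros elsewhere in that row.

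The remaining step is to show that for every row $i$ each maximizer of the row subproblem lies outside the forbidden set $J_i \defeq \{k : (i,k) \in (\mathcal{I}, \mathcal{J})\}$. Since $\SS \in \Phi$, we have $S_{ik} = 0$ for all $k \in J_i$. Row-stochasticity together with the block structure force the sum of $S_{ik}$ over the block-allowed range to equal $1$; and since $\Phi$ is nonempty, $J_i$ cannot cover the entire block-allowed range in row $i$, so some block-allowed $k \notin J_i$ must satisfy $S_{ik} > 0$. The resulting strict gap between $0$ on $J_i$ and this positive row-maximum ensures every row-level maximizer lies in the block-allowed range but outside $J_i$. Assembling over rows, $(\proj_\RRR \SS)_{ij} = 0$ for all $(i,j) \in (\mathcal{I}, \mathcal{J})$, i.e.\ $\proj_\RRR \SS \in \Phi$.

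The argument is mostly structural bookkeeping; the only point that requires care is the tie-breaking step, where the non-emptiness of $\Phi$ must be used to guarantee that some block-allowed column outside $J_i$ carries positive mass in row $i$. Without this, a tie between the zero value on $J_i$ and the row-maximum could in principle allow the projection to place its $1$ on a forbidden column.
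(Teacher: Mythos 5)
Your proposal is correct and takes essentially the same route as the paper's proof: both reduce the projection to a row-wise choice of a block-allowed column carrying a maximal entry (equivalently, maximizing $\langle \SS, \RR\rangle$ over $\RR\in\RRR$), note that row-stochasticity makes each row maximum strictly positive while $\SS\in\Phi$ forces the forbidden entries to be zero, and conclude via the characterization of $\Phi$ from Lemma~\ref{lem:polytope_geometry}. The only cosmetic difference is your appeal to the non-emptiness of $\Phi$, which is superfluous since $\SS\in\Phi$ together with unit row sums already guarantees positive mass on some allowed, non-forbidden column in every row.
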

It follows from Lemmas \ref{lem:solution_on_interior} and \ref{lem:voronoi_face} that any nearest vertex to a solution must be a solution itself. Combining the above with the corollary to Theorem \ref{thm:c_threshold} concludes the proof of
\begin{theorem}
    \label{thm:guarantees}
    If $c \geq \left(\frac{(n+m)^2-n-m}{2}\right)^{1/\rho}$, then $$\proj_\RRR \left(\argmin_{\SS\in\SSS} \obj(\SS)\right) \subseteq \argmin_{\RR\in\RRR}\dis\RR.$$
\end{theorem}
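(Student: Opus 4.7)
The plan is to chain together Lemmas \ref{lem:polytope_geometry}, \ref{lem:solution_on_interior}, \ref{lem:voronoi_face} and the corollary to Theorem \ref{thm:c_threshold}. Fix an arbitrary $\SS^* \in \argmin_{\SS \in \SSS}\obj(\SS)$ and any nearest vertex $\RR^* \in \proj_\RRR \SS^*$; my goal is to certify that $\RR^*$ minimizes $\dis$ over $\RRR$.

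First I would invoke the standard polytope fact that every point of a polytope lies in the relative interior of a unique face. Applied to $\SS^*$, this yields a face $\Phi$ of $\SSS$ with $\SS^* \in \Phi \setminus \partial \Phi$. By Lemma \ref{lem:solution_on_interior}, the optimality of $\SS^*$ propagates to all of $\Phi$, so $\Phi \subseteq \argmin_{\SS \in \SSS}\obj(\SS)$. In particular, every vertex of $\Phi$ is a global minimizer of $\obj$ over $\SSS$.

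Next I would apply Lemma \ref{lem:voronoi_face} to conclude $\RR^* \in \Phi$. Since $\RR^* \in \RRR$ is a vertex of $\SSS$ that lies in $\Phi$, it is in particular a vertex of $\Phi$, hence a global minimizer of $\obj$ over $\SSS$ by the previous step. Combined with the inclusion $\RRR \subseteq \SSS$, this forces $\obj(\RR^*) = \min_{\RR \in \RRR}\obj(\RR)$, i.e.\ $\RR^* \in \argmin_{\RR \in \RRR}\obj(\RR)$. The hypothesis $c \geq \left(\frac{(n+m)^2-n-m}{2}\right)^{1/\rho}$ then lets me apply the corollary to Theorem \ref{thm:c_threshold}, yielding $\RR^* \in \argmin_{\RR \in \RRR}\dis \RR$, as required.

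The argument is essentially a gluing exercise: all the technical work has been front-loaded into the three lemmas and the corollary, and the ``hard part'' is conceptual rather than computational, namely recognizing that Lemma \ref{lem:solution_on_interior} spreads optimality outward from $\SS^*$ to the relative interior of its containing face while Lemma \ref{lem:voronoi_face} confines the projection $\RR^*$ to that same face, so the two meet at a vertex that is simultaneously a minimizer over $\SSS$ and a point of $\RRR$. The only subtlety worth double-checking is the standard polytope fact about relative interiors of faces, but this is immediate from Lemma \ref{lem:polytope_geometry}'s characterization of faces of $\SSS$ via zero-patterns, since a minimal zero-pattern for $\SS^*$ is well-defined.
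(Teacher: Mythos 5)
Your proposal is correct and follows essentially the same route as the paper: the paper's proof is exactly the chain ``Lemma \ref{lem:solution_on_interior} spreads optimality over the face whose relative interior contains $\SS^*$, Lemma \ref{lem:voronoi_face} places the nearest vertex in that same face, hence the projected vertex minimizes $\obj$ over $\RRR$, and the corollary to Theorem \ref{thm:c_threshold} converts this into minimality of $\dis$.'' Your additional remarks (uniqueness of the face containing $\SS^*$ in its relative interior, and the comparison $\min_{\RRR}\obj \geq \min_{\SSS}\obj$) only make explicit what the paper leaves implicit.
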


Theorem \ref{thm:guarantees} provides theoretical guarantees for the solutions to (\ref{eqn:continuous_optimization}) to deliver the Gromov--Hausdorff distance, and is the main theoretical result of this work.
%along with the optimality guarantees for its solutions
%whose solutions provably deliver the Gromov--Hausdorff distance (see Theorem \ref{thm:guarantees} below) is the main theoretical result of this work.

\section{Solving the relaxation}
\label{minimization}
\subsection{Conditional gradient descent}
(\ref{eqn:continuous_optimization}) is an indefinite %(see Section \ref{convexity})
quadratic minimization with affine constraints and a Lipschitz gradient $$\grad\obj(\SS) = 2\left(c^{\VV}\SS c^{-\UU} + c^{-\VV}\SS c^{\UU} + \big(c^{\WW}\SS c^{-\WW} + c^{-\WW}\SS c^{\WW}\big)^T\right).$$ While finding its global minimum remains an NP-hard problem, approximate solutions can be efficiently obtained by the Frank--Wolfe algorithm \cite{frank1956algorithm}, also known as conditional gradient descent.

The iterative algorithm starts at some $\SS_0 \in \SSS$. At every iteration, it finds the descent direction as a point in $\SSS$ that minimizes the cosine similarity with the gradient at the current point $\SS_i$,
$$\DD_i \in \argmin_{\SS \in \SSS}\left\langle \SS, \grad\obj(\SS_i)\right\rangle.$$
The descent direction $\DD_i$ chosen by the algorithm is always a vertex of $\SSS$ (note that the minimum must be attained at some vertex due to the linearity of the problem).
The algorithm then finds a point on the line segment $\overline{\SS_i\DD_i}$ that minimizes $\obj$, $$\SS_{i+1} \in \argmin_{\gamma \in [0, 1]} \obj\left(\gamma\SS_i + (1-\gamma)\DD_i\right),$$
which concludes the $i$-th iteration.

The algorithm's convergence is measured as the \textit{Frank--Wolfe gap} $\left\langle \SS_i - \DD_i, \grad \obj(\SS_i) \right\rangle \geq 0$, which is zero if and only if $\SS_i$ is a stationary point. 
The algorithm terminates when the Frank--Wolfe gap becomes sufficiently small (or after reaching the iteration limit). It takes the Frank--Wolfe algorithm $O(\epsilon^{-2})$ iterations to approach a stationary point with the gap of $\epsilon$ \cite{lacoste2016convergence}, each iteration requiring $O(n^3)$ time.
% \begin{theorem}
%     \label{thm:time_complexity}
%     One Frank--Wolfe iteration for solving (\ref{eqn:continuous_optimization}) takes $O(n^3)$ time.
% \end{theorem}

\subsection{Stationary points}
The structure of $\SSS$ helps characterize the stationary points of (\ref{eqn:continuous_optimization}). The following result suggests that the trend of saddle point prevalence in high-dimensional non-convex optimization \cite{dauphin2014identifying} is unlikely to manifest in (\ref{eqn:continuous_optimization}) for a broad class of metric spaces (e.g. random metric spaces from \cite{vershik2004random}).
\begin{restatable}{theorem}{oneminimizer}
    \label{thm:one_minimizer}
    Let the distances stored in $\XX$ be realized by continuous random variables $\mathrm{D}_1, \ldots, \mathrm{D}_{\frac{n(n-1)}{2}}$ such that any $\mathrm{D}_i$ restricted to any permissible realization of the rest of the variables $\left\{\mathrm{D}_j=d_j:j\neq i\right\}$ has support of non-zero measure. If $c > 1$, then the linear minimization from the first-order necessary optimality condition for any $\SS^*\in\SSS$ almost surely has a single solution:
    $$\mathbb{P}_\XX\left[\Big|\argmin_{\SS\in\SSS} \left\langle \SS, \grad\obj(\SS^*) \right\rangle\Big| > 1\right] = 0.$$
    %and any permissible realization of all but one of them $\{X_j=x_j:j\neq i\}$ the support of $X_i|x_1,\ldots,x_{i-1},x_{i+1},\ldots,x_\frac{n(n-1)}{2}$ has positive measure.
\end{restatable}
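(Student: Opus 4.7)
The plan is to exploit the product structure of the bi-mapping polytope to reduce uniqueness of the linear optimizer to the absence of ties within rows of the gradient, and then invoke an algebraic non-vanishing argument powered by the randomness of $\XX$.

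First, since the constraints defining $\SSS$ couple entries only within rows (each row sums to $1$ and the forbidden entries are prescribed row by row), the polytope factors as a Cartesian product $\SSS \cong \prod_{i=1}^{n+m}\Delta_i$ of row-simplices, where $\Delta_i$ is the probability simplex over the columns allowed for row $i$. The linear functional $\SS \mapsto \langle \SS, \grad\obj(\SS^*) \rangle$ is then additively separable across rows, and its minimum on each $\Delta_i$ is attained at a unique vertex iff the restriction of $\grad\obj(\SS^*)$ to the allowed columns of row $i$ has a unique minimal entry. Hence $|\argmin_{\SS\in\SSS}\langle\SS, \grad\obj(\SS^*)\rangle|=1$ iff no row of $\grad\obj(\SS^*)$ contains two entries tying for its minimum.

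Second, a union bound over the finite family of triples $(i, j_1, j_2)$ with $j_1 \neq j_2$ in the columns allowed for row $i$ reduces the proof to showing $\mathbb{P}_\XX\bigl(\grad\obj(\SS^*)_{i,j_1} = \grad\obj(\SS^*)_{i,j_2}\bigr) = 0$ for each such triple. Expanding
$$\grad\obj(\SS^*)_{pq} = 2\sum_{k,l} S^*_{kl}\left(c^{V_{pk} - U_{lq}} + c^{-V_{pk} + U_{lq}} + c^{W_{qk} - W_{lp}} + c^{-W_{qk} + W_{lp}}\right),$$
each of $V_{pk}, U_{lq}, W_{qk}, W_{lp}$ equals either $0$, a single entry of $\XX$, or a single entry of $\YY$ according to the block of $\VV, \UU, \WW$ containing the respective index pair. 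The difference $\grad\obj(\SS^*)_{i,j_1} - \grad\obj(\SS^*)_{i,j_2}$ is therefore a real-analytic function of the entries of $\XX$, and the hypothesis on the conditional supports of $\mathrm{D}_1, \ldots, \mathrm{D}_{n(n-1)/2}$ implies, via Fubini applied one variable at a time together with the one-variable fact that the zeros of a nontrivial analytic function are discrete, that the zero set of any not-identically-vanishing analytic function of $\XX$ has $\XX$-probability zero.

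It remains to verify that $\grad\obj(\SS^*)_{i,j_1} - \grad\obj(\SS^*)_{i,j_2}$ is not identically zero as a function of $\XX$. The columns $j_1 \neq j_2$ enter the gradient only through the factors $c^{\pm U_{l,j}}$ and $c^{\pm W_{j,k}}$, and distinct columns select distinct entries of $\UU$ and $\WW$, hence distinct entries of $\XX$ or $\YY$ depending on block positions. Picking any pair $(k, l)$ with $S^*_{kl} > 0$ (at least one per row exists by row-stochasticity of $\SS^*$), I would trace the block structure of $\UU$ and $\WW$ to identify at least one $X_{ab}$ contributing to one of the two terms but not the other; since $c > 1$ gives $\ln c \neq 0$, the partial derivative $\partial_{X_{ab}}$ of the difference will then be non-vanishing, so the difference itself is a non-constant analytic function of $\XX$. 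The main obstacle will be the case analysis at this step: because the positions of $i, j_1, j_2, k, l$ in the $n$- or $m$-indexed halves of $\{1, \ldots, n+m\}$ determine whether each factor picks an $X$- or a $Y$-entry, verifying that some $X$-entry genuinely distinguishes $j_1$ from $j_2$ for every admissible $\SS^* \in \SSS$ requires enumerating these configurations. The hypothesis $c > 1$ is essential here, as at $c = 1$ every exponential equals $1$, the gradient becomes constant in $\XX$, and all distinctions collapse.
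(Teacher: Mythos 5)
Your first two steps (reduction to per-row uniqueness of the minimal allowed gradient entry, then a union bound over pairs of admissible columns) are exactly the paper's, and the ``zero set of a not-identically-vanishing analytic function is null'' device is fine in principle. The genuine gap is the step you explicitly defer as ``case analysis'': proving that $\grad\obj(\SS^*)_{i,j_1}-\grad\obj(\SS^*)_{i,j_2}$ is not identically zero as a function of $\XX$. Your plan --- find an entry $X_{ab}$ appearing in one gradient entry but not the other and differentiate --- rests on a false structural premise for the rows $i\le n$ (the $\FF$-block rows). For such a row the admissible columns $j\le m$ index points of $Y$, and $j$ enters $\grad\obj(\SS^*)_{ij}$ only through $U_{lj}$ (which is $Y_{lj}$ for $l\le m$ and $0$ for $l>m$) and $W_{jk}$ (which is $Y_{j,k-n}$ for $k>n$ and $0$ for $k\le n$); every random entry of $\XX$ that occurs at all, namely $X_{ik}$ via $V_{ik}$ and $X_{l-m,i}$ via $W_{li}$, occurs in \emph{both} entries being compared. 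So there is no $X_{ab}$ distinguishing $j_1$ from $j_2$: the difference is a combination of $c^{\pm X_{ik}}$ and $c^{\pm X_{l-m,i}}$ whose coefficients, e.g. $\sum_{l} S^*_{kl}\left(c^{-Y_{lj_1}}-c^{-Y_{lj_2}}\right)$, depend only on $\YY$, $c$ and $\SS^*$ --- and they can cancel, in which case your partial-derivative argument has nothing to differentiate.

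And they do cancel for legitimate inputs: take $m=2$, let every $\FF$-block row of $\SS^*$ be $(\tfrac12,\tfrac12)$ and let the two $\GG$-block rows be equal to each other. This $\SS^*$ is invariant under the relabeling $y_1\leftrightarrow y_2$, which is an isometry of any two-point $Y$, so $\grad\obj(\SS^*)_{i,1}=\grad\obj(\SS^*)_{i,2}$ for \emph{every} realization of $\XX$; since columns $1,2$ are the only admissible ones in row $i$, the tie sits at the row minimum with probability one. Hence the verification you postpone is not routine bookkeeping: it fails outright for such degenerate $\SS^*$ (more generally, whenever the $\YY$-built coefficients above vanish row by row), and a complete proof must either exclude or separately treat these configurations --- which puts pressure on the ``for any $\SS^*\in\SSS$'' quantifier itself. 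Be aware also that you cannot close this hole by importing the paper's argument: the paper instead rewrites the tie as a generalized-polynomial equation $\sum_{i,j}a_{ij}c^{d_i-d_j}=0$ in the realized distances and asserts it has finitely many solutions, but it never rules out that the equation is identically satisfied (all relevant coefficients zero), which is precisely the degeneracy exhibited above. For the $\GG$-block rows ($i>n$), where the competing columns do select genuinely distinct random entries $X_{l-m,j_1-m}$, $X_{l-m,j_2-m}$, your strategy is workable; the $\FF$-block rows are where the proposal, as written, breaks.
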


% $\big|\argmin_{\SS\in\SSS} \langle \SS, \grad\obj(\SS^*) \rangle$ combined with the first-order condition $\SS^* \in \argmin$ immediately fullfills the second-order condition as it pertains to the points in $\argmin _{\SS\in\SSS} \langle \SS, \grad\obj(\SS^*) \rangle\setminus \{\SS^*\} = \emptyset.$

For any $\SS^* \in \SSS$, the probability distribution of the distances in $\XX$ induces a random matrix $\grad\obj(\SS^*)$ and a Bernoulli variable representing whether $\SS^*$ is a stationary point of (\ref{eqn:continuous_optimization}). Under the assumption of independence of these two random objects, Theorem \ref{thm:one_minimizer} entails that any stationary point $\SS^*$ almost surely satisfies its second-order sufficient optimality condition as the latter ends up being imposed on the empty set $\argmin _{\SS\in\SSS} \left\langle \SS, \grad\obj(\SS^*) \right\rangle\setminus \left\{\SS^*\right\}.$ In practice, it means that the approximate solutions to (\ref{eqn:continuous_optimization}) recovered by the Frank--Wolfe algorithm are likely to be points of local minima and not saddle points.

\subsection{Non-convexity and the choice of $c$}
\label{convexity}
The Hessian of $\obj$ is a constant $(n+m)^2\times(n+m)^2$ matrix $$\HH \defeq \grad^2\obj(\cdot) = 2\left(c^{\VV}\otimes c^{-\UU} + c^{-\VV}\otimes c^{\UU} + \left(c^{\WW}\otimes c^{-\WW^T} + c^{-\WW}\otimes c^{\WW^T}\right)\KK\right),$$ where $\otimes$ is the Kronecker product and $\KK \in \{0, 1\}^{(n+m)^2\times(n+m)^2}$ denotes the commutation matrix. %Let $\Lambda \subset \R$ denote the multi-set of its $(n+m)^2$ eigenvalues.
The standard approach to navigating the non-convexity of $\obj$ is to consider the space spanned by the eigenvectors corresponding to negative eigenvalues of $\HH$. However, doing so has $O(n^6)$-time and $O(n^4)$-memory complexity and is impractical for most cases. As a cheaper substitute, we assess the extent of non-convexity of $\obj$ using the normalized nuclear norm-induced distance from $\HH$ to the set of positive semidefinite matrices $\mathcal{M}^+$: $$\nconv(\obj) \defeq \inf_{\mathbf{M} \in \mathcal{M}^+} \frac{\|\HH - \mathbf{M}\|_*}{\|\HH\|_*} = \frac{\lambda^-}{\lambda^+ + \lambda^-} \in [0, 1],$$ where $\|\cdot\|_*$ denotes the nuclear norm and $\lambda^-$ and $\lambda^+$ are the total magnitudes of respectively negative and positive eigenvalues of $\HH$ \cite{davydov2019searching}. In particular, this distance is equal to 0 (or 1) if and only if the function is convex (or concave).

Note that $\lambda^+ - \lambda^- = \tr\HH = 8(n+m)^2$ due to the zero main diagonals of $\VV$ and $\UU$ and
\begin{align*}
    \tr\left(\left(c^\WW \otimes c^{-\WW^T}\right)\KK\right) = \left\langle \KK,  c^\WW \otimes c^{-\WW^T} \right\rangle 
    = \sum_{i,j=1}^{n+m} c^{W_{ij}}c^{(-\WW^T)_{ij}}
    = (n+m)^2
    = \tr\left(\left(c^{-\WW} \otimes c^{\WW^T}\right)\KK\right),
\end{align*}
and therefore $$\nconv(\obj) = \frac{\lambda^+ - 8(n+m)^2}{2\lambda^+ - 8(n+m)^2}\in [0, \frac{1}{2}).$$
%  = 1 - \frac{1}{2 - \frac{8(n+m)^2}{\lambda^+}} 
Let $\lmax > 0$ denote the dominant eigenvalue of $\HH$ as per the Perron--Frobenius theorem. In the degenerate case of $c=1$ this is the only non-zero eigenvalue of the rank-1 Hessian $\HH=8\left(\mathbf{1}\mathbf{1}^T\right)$, which makes $\obj$ convex (though uninformative). %so $\lambda^+ = 8(n+m)^2=\lmax$.
%the problem becomes convex because its rank-$1$ Hessian with all entries equaling 8 rhas $\lambda^-=0$ and $\lambda^+=8(n+m)^2$.
On the other hand, the Courant--Fischer theorem yields $$\lmax \geq \frac{\mathbf{1}^T\HH\mathbf{1}}{\mathbf{1}^T\mathbf{1}} = \frac{\|\HH\|_1}{(n+m)^2} \xrightarrow[c \to \infty]{} \infty,$$
and
$\lambda^+ \geq \lmax$
then entails
$\nconv(\obj) \xrightarrow[c \to \infty]{} \frac{1}{2}.$
%$\lambda^+ \geq \lmax \geq \frac{\|\HH\|_1}{(n+m)^2} \to \infty$ as $c \to \infty$, and thus $\lim_{c \to \infty} \nconv = \frac{1}{2}$.
We conclude that the landscape of (\ref{eqn:continuous_optimization}) starts flat at $c=1$ and becomes increasingly non-convex as $c$ grows. In particular, the following result provides a tractable ($O(n^2)$-time and -memory) bound on its non-convexity based on the value of $c$.
%to $(n^2-n)^{\frac{1}{\epsilon}}$, the largest value of interest in the context of finding the GH distance.
\begin{restatable}{theorem}{nonconvexity}
    \label{thm:nonconvexity}
    Let $\alpha \in \left[0, \frac{1}{2}\right)$ and $c \geq 1$ satisfy 
    $$\frac{2\alpha}{\frac{1}{2}-\alpha} = \frac{\sqrt{16(n+m)^4 + \left(c^{\dmax} + c^{-\dmax} - 2\right)p_{\max} - \frac{16}{(n+m)^4}\left\|c^{\WW}\right\|_1^2\left\|c^{-\WW}\right\|_1^2}}{n+m},$$
    %$$1-2\alpha = \frac{n+m}{\sqrt{16(n+m)^4 + !!p_{\max}(2\cosh(\dmax \ln c) - 2) - \frac{16}{(n+m)^2}\|c^\WW\|_1^2\|c^{-\WW}\|_1^2} + n+m},$$
    where $\dmax \defeq \max\{\diam X, \diam Y\}$ % = \|\WW\|_\infty$
    and $p_{\max} \defeq \frac{\left(2\sqrt{2} + 4\right)(n+m)^2\sqrt{(n+m)^2\|\WW\|_2^2 - \|\WW\|_1^2}}{\dmax}+6$. Then $\nconv(\obj) \leq \alpha$.% \quad \forall c \in [1, c_*].$$
\end{restatable}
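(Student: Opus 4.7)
The plan is to reformulate the claim as an explicit bound on the sum of negative eigenvalues of $\HH$, reduce it to a Frobenius-norm estimate on a centered Hessian via standard spectral inequalities, and close with an entry-wise calculation.

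First I translate the target. Using $\nconv(\obj) = \lambda^-/(\lambda^+ + \lambda^-)$ and $\lambda^+ - \lambda^- = \tr\HH = 8(n+m)^2$ from the preceding discussion, $\nconv(\obj) \leq \alpha$ is equivalent to $\lambda^-(\HH) \leq \frac{8\alpha(n+m)^2}{1-2\alpha}$. Plugging in the hypothesised relation $\frac{2\alpha}{\frac{1}{2} - \alpha} = \frac{B}{n+m}$ (with $B$ denoting the right-hand side square root) collapses this to the concrete target
\[
\lambda^-(\HH) \leq 2(n+m)\, B.
\]

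Next I reduce to a Frobenius bound. Writing $\HH = 8\mathbf{1}\mathbf{1}^T + \Delta\HH$, the rank-one piece is PSD, so Weyl's monotonicity gives $\lambda^-(\HH) \leq \lambda^-(\Delta\HH)$; since $\tr(\Delta\HH) = 0$, $\lambda^-(\Delta\HH) = \|\Delta\HH\|_\ast/2$; and Cauchy--Schwarz on singular values (with $\mathrm{rank}(\Delta\HH) \leq (n+m)^2$) yields $\|\Delta\HH\|_\ast \leq (n+m)\|\Delta\HH\|_2$. It thus suffices to prove $\|\Delta\HH\|_2 \leq 4B$. Unpacking the Kronecker structure via the identities $(A \otimes B)_{(i,i'),(j,j')} = A_{ij}B_{i'j'}$ and $((A \otimes B)\KK)_{(i,i'),(j,j')} = A_{ij'}B_{ji'}$ on each of the four summands of $\HH$ collapses the entries to
\[
\Delta\HH_{(i,i'),(j,j')} = 2\bigl[\phi(V_{ij} - U_{i'j'})^2 + \phi(W_{ij'} - W_{ji'})^2\bigr], \qquad \phi(x) \defeq c^{x/2} - c^{-x/2},
\]
which are entry-wise non-negative with zero diagonal. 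Since all arguments of $\phi$ lie in $[-\dmax,\dmax]$, convexity of $\phi^2$ yields both the uniform bound $\phi(\cdot)^2 \leq c^{\dmax} + c^{-\dmax} - 2$ and the sharper linear-in-argument estimate $\phi(x)^2 \leq (c^{\dmax} + c^{-\dmax} - 2)|x|/\dmax$.

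The remaining step, which I expect to be the main obstacle, is to bound $\|\Delta\HH\|_2^2 = 4\sum_{i,i',j,j'}(\phi(a)^2 + \phi(b)^2)^2$ (with $a = V_{ij} - U_{i'j'}$, $b = W_{ij'} - W_{ji'}$) by $16B^2$. Expansion yields a ``direct'' contribution $\sum\phi(a)^4 + \sum\phi(b)^4$ and a ``cross'' contribution $\sum\phi(a)^2\phi(b)^2$. Because $\VV$, $\UU$, and $\WW$ share the same multiset of entries, a direct computation gives
\[
\sum_{i,i',j,j'}(V_{ij} - U_{i'j'})^2 = \sum_{i,i',j,j'}(W_{ij'} - W_{ji'})^2 = 2\bigl[(n+m)^2\|\WW\|_2^2 - \|\WW\|_1^2\bigr],
\]
and a Cauchy--Schwarz step $\sum|a||b| \leq \sqrt{\sum a^2 \cdot \sum b^2}$ produces the $(2\sqrt{2}+4)(n+m)^2\sqrt{(n+m)^2\|\WW\|_2^2 - \|\WW\|_1^2}/\dmax$ factor inside $\pmax$; the $+6$ constant in $\pmax$ originates from the algebraic identity $(c^x + c^{-x} - 2)^2 = c^{2x} + c^{-2x} - 4c^x - 4c^{-x} + 6$; and the subtractive term $-16\|c^{\WW}\|_1^2\|c^{-\WW}\|_1^2/(n+m)^4$ in $B^2$ corresponds to a lower bound on the Frobenius norm of one of the four Kronecker summands via the reverse Cauchy--Schwarz $\|A\|_2^2 \geq \|A\|_1^2/(n+m)^4$. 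The central difficulty is tracking these three corrections simultaneously so that the precise constants (in particular the $(2\sqrt{2} + 4)$ factor) fall out correctly from the chain of triangle and Cauchy--Schwarz manipulations.
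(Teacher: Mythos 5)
Your opening reduction is correct as far as it goes: the hypothesis does translate to the target $\lambda^-(\HH) \leq 2(n+m)\,B$, and your entrywise formula $\Delta\HH_{(i,i'),(j,j')} = 2\bigl[\phi(V_{ij}-U_{i'j'})^2 + \phi(W_{ij'}-W_{ji'})^2\bigr]$ with $\phi(x)^2 = c^x + c^{-x} - 2$ is a valid way to describe $\HH - 8\mathbf{1}\mathbf{1}^T$. However, the step that you flag as the ``main obstacle'' --- proving $\|\Delta\HH\|_2 \leq 4B$ --- is in fact where the plan breaks, and not merely because the constants are delicate to track.

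Expanding the definition gives the \emph{exact} identity
$\|\Delta\HH\|_2^2 = \|\HH\|_2^2 - 16\,\|\HH\|_1 + 64(n+m)^4 = \|\HH\|_2^2 - 128\,\|c^\WW\|_1\|c^{-\WW}\|_1 + 64(n+m)^4$
(using that $\HH$ is entrywise non-negative and $\|\HH\|_1 = 8\|c^\WW\|_1\|c^{-\WW}\|_1$). So the subtractive term your route produces is \emph{linear} in $t \defeq \|c^\WW\|_1\|c^{-\WW}\|_1$, whereas $16B^2$ contains the \emph{quadratic} subtraction $-\frac{256}{(n+m)^4}t^2$. Since $t \geq (n+m)^4$ always (Cauchy--Schwarz, with equality iff $c=1$), the quadratic term strictly dominates the linear one for $c>1$, so the needed inequality $\|\Delta\HH\|_2^2 \leq 16B^2$ would force $\frac{64}{(n+m)^4}(t-(n+m)^4)^2 \leq 0$ and fail unless $c=1$. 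Put differently: subtracting the fixed rank-one PSD matrix $8\mathbf{1}\mathbf{1}^T$ does not remove the actual Perron eigenspace of $\HH$ (whose top eigenvalue grows like $t/(n+m)^2 \gg 8(n+m)^2$), so $\|\Delta\HH\|_2$ still carries the full weight of $\lambda_1$ and the nuclear-norm bound $\lambda^- \leq \frac{n+m}{2}\|\Delta\HH\|_2$ is too lossy. Your proposed appeal to ``reverse Cauchy--Schwarz on one of the Kronecker summands'' cannot rescue this: a lower bound on a Frobenius norm of a piece of $\HH$ has no clean place to be subtracted in the identity above.

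The paper instead keeps $\HH$ intact and applies Cauchy--Schwarz to the \emph{non-dominant} eigenvalues: $\lambda^- \leq \sum_{i\geq 2}|\lambda_i| \leq \sqrt{\bigl((n+m)^2-1\bigr)\bigl(\|\HH\|_2^2 - \lambda_1^2\bigr)}$. The quadratic subtraction comes from $\lambda_1^2$, which is then lower-bounded via the Rayleigh quotient with the all-ones vector, $\lambda_1 \geq \frac{\|\HH\|_1}{(n+m)^2} = \frac{8}{(n+m)^2}\|c^\WW\|_1\|c^{-\WW}\|_1$; squaring this is precisely what yields $\frac{64}{(n+m)^4}t^2$. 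You need to isolate $\lambda_1^2$ (not $16\|\HH\|_1$) as the subtrahend, and that requires working at the eigenvalue level rather than with a fixed rank-one deflation. Separately, the term $p_{\max}$ and the constants $(2\sqrt{2}+4)$ and $+6$ in the paper come from a specific upper bound on $\|\ch(\mathbf{A})\|_1$ (entrywise $c^{A_{ij}}+c^{-A_{ij}}$) in terms of $\sqrt{(n+m)^4}\,\|\mathbf{A}\|_2/\|\mathbf{A}\|_\infty$, applied to the four blocks of $\|\HH\|_2^2$ together with the orthogonality $\langle(\WW\ominus\WW^T)\KK,\, \UU\ominus\VV\rangle = 0$ --- a cleaner accounting than the fourth-power/cross-term expansion you sketch, and one that makes the $(2\sqrt{2}+4)$ coefficient fall out transparently.
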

The choice of $c$ is therefore a balancing act between increasing the fraction of solutions (i.e. global minima) delivering the Gromov--Hausdorff distance and decreasing the probability of discovering a solution by the Frank--Wolfe algorithm. As $c\to\infty$, the fraction stops growing after attaining 1 at some $c = c^*$, while the chances of discovering a solution continue to decay due to the increasingly non-convex optimization landscape. Because Theorem \ref{thm:c_threshold} implies $c^* \leq \left(\frac{(n+m)^2-n-m}{2}\right)^{1/\rho}$, it is never optimal to choose $c$ above this bound.

\section{Numerical experiments}
\label{numerical}
We demonstrate the method's performance using our implementation from a Python package \texttt{dgh}.
Given a choice of $c$ and a budget of Frank--Wolfe iterations, it starts from a random $\SS_0 \in \SSS$ and iterates until the Frank--Wolfe gap is at most $10^{-16}$ (or there are no iterations left), after which the last $\SS_i$ is projected to the nearest $\RR \in \RRR$. The cycle repeats until the budget is depleted, after which the smallest found $\frac{1}{2}\dis \RR$ is returned as an estimate and upper bound of the Gromov--Hausdoff distance (we note that this approach is easily parallelizable as there is no interdependency between the random restarts).
In a bid to save on redundant computations, \texttt{dgh} also compares every new best $\frac{1}{2}\dis \RR$ against the trivial lower bound $$\dGH(X, Y) \geq \frac{1}{2}\max\left\{\left|\diam X - \diam Y\right|, \left|\rad X - \rad Y\right|\right\}\quad\cite{memoli2012some}$$ and terminates whenever they match, which means that the Gromov--Hausdorff distance was found exactly. Prior to the computations, \texttt{dgh} normalizes all distances in the input metric spaces so that $\max\{\diam X, \diam Y\} = 1$ to avoid floating-point arithmetic overflow, and scales the resulting $\dis\RR$ back afterwards. %Every time a smaller $\dis\RR$ is obtained, \texttt{dgh} compares it to the trivial lower bound of the Gromov--Hausdorff distance $$\dGH(X, Y) \geq \max\left\{\right\}$$ and terminates if they match
%--- note that this is equivalent to solving (\ref{eqn:continuous_optimization}) for $c^{\dmax}$ and does not affect the optimization landscape for the GH distance.

In the following, we describe our numerical experiments on synthetic and model metric spaces. % Each experiment was performed using a single core of Intel Xeon E5-2695v4 processor running at 2.1 GHz.
All computations were performed on a standard 2016 Intel i7-7500U processor.
% beside 4.3, which due to the large volume of computation was performed on the supercomputer Falcon 

% We implement our approach to computing the GH distance as a Python package \texttt{dgh}. Given $c > 1$ and the budget of Frank--Wolfe iterations, our implementation solves a sequence of (\ref{eqn:continuous_optimization}) for $c_1 = c$ and $c_{k+1} = 1 + 10(c_k - 1)$ using the Frank--Wolfe algorithm with warm starts. The starting point for the first minimization in a sequence is selected from $\SSS$ uniformly at random. For every $c_k$, the minimization stops when the Frank--Wolfe gap gets below $10^{-8}$ or when no iterations are left in the budget. Once $c_{k+1}$ exceeds $10^8$, the (approximate) solution for $c_k$ is stored and the sequence is restarted until the iteration budget is depleted. The final solution is selected for the smallest distortion of its projection onto $\RRR$ which delivers an upper bound of the GH distance. All distances in the input metric spaces are scaled by $\dmax$ (so that $\max\{\diam X, \diam Y\} = 1$) prior to the computations to avoid floating-point arithmetic overflow --- note that this is equivalent to solving (\ref{eqn:continuous_optimization}) for $c^{\dmax}$ and does not affect the optimization landscape for the GH distance.

% Next, we describe numerical experiments on synthetic, real-world, and theoretical metric spaces demonstrating the performance and applicability of our approach.

\subsection{Synthetic metric spaces}

We evaluate the speed and accuracy of \texttt{dgh} on synthetic point clouds and metric graphs by computing the Gromov--Hausdorff distance from each space to its isometric copy. A point cloud is generated by uniformly sampling $n=200$ points from the unit cube in $\R^3$ and taking the Euclidean distance between them. A graph is repeatedly generated according to the Erd\H{o}s-R\'{e}nyi model with $n=200$ vertices and the edge probability of $p=0.05$ until it is connected, and then endowed with the shortest path metric. We generate 100 point clouds and 100 metric graphs, and run the experiment on each metric space for $c=1+10^{-10},1+10^{-9},\ldots,1+10^{40}$ and with the budgets of 100 and 1000 iterations. Note that (the matrix representation of) a mapping pair $(f, g)$ is a solution to (\ref{eqn:quadratic_objective}) if and only if $f=g^{-1}$ is an isometry, which corresponds to an all-zeros distance difference matrix. As a consequence, the solution set of the relaxation coincides with the original solutions to the Gromov--Hausdorff distance (\ref{eqn:matrix_reformulation}) for any $c>1$. 
% However, increasing the value of $c$ can affect the distances from local minima of (\ref{eqn:continuous_optimization}) to the set of isometries. 

% By the same logic used in the proof og Theorem 1, increasing the value of $c$ within a certain range penalizes large-magnitude entries in the corresponding ``soft'' distance difference matrices, thus making local minima better approximations of an isometry in the $\infty$-norm sense and potentially bringing them closer to the original solutions on the polytope $\SSS$

% However, the value of $c$ can affect the distances from local minima of (\ref{eqn:continuous_optimization}) to the set of isometries, and therefore the likelihood of an approximate solution recovered by \texttt{dgh} to deliver the (zero) Gromov--Hausdorff distance. %We expect the dependency of this likelihood on c to follow the logic behind Theorem 1: incerases in some range, then only decreases because benefits are gone but non-convexity grows

%, and therefore the projection of the solution set of (\ref{eqn:continuous_optimization}) onto $\RRR$

\begin{figure}[h]
    \centering
    \caption{Performance of \texttt{dgh} on synthetic metric spaces with $n=200$ points.} % with $|X|=|Y|=200$.}% recovered by \texttt{dgh}.}
    \vspace{.25cm}
    \includegraphics[width=17cm]{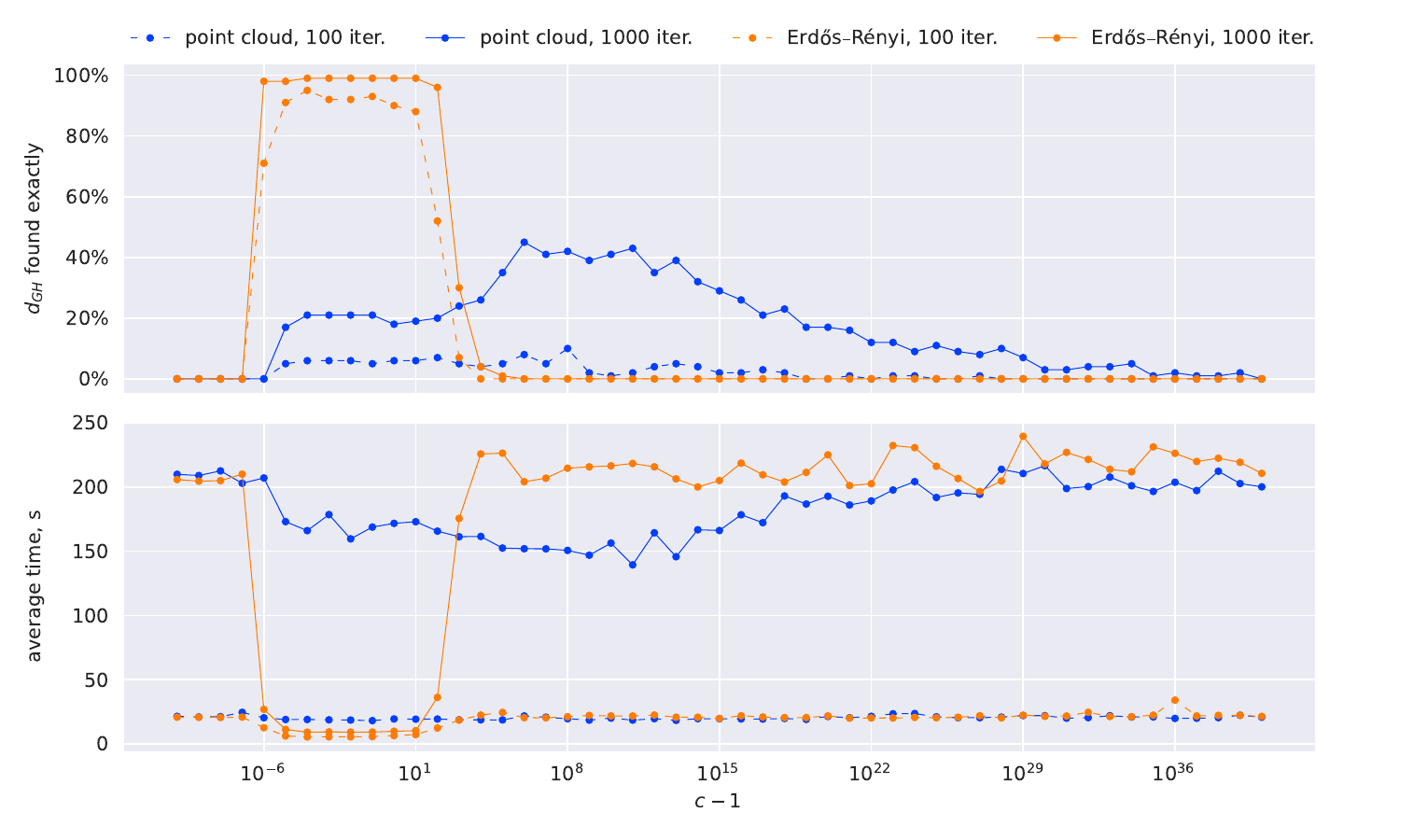}
    \label{fig:performance}
\end{figure}

Figure \ref{fig:performance} shows the percentage of experiments in which the Gromov--Hausdorff distance was found exactly and the average compute time required by \texttt{dgh}, separately for each metric space type and iteration budget combination. 
%Predictably, the accuracy of the algorithm vanishes once $c$ becomes too small or too large. %While solving 
%(\ref{eqn:continuous_optimization}) for $c \leq 1 + 10^-5$ for the generated spaces becomes prohibitively slow due to the machine precision issues, the poor fit of the relaxation at some $c=1+\varepsilon$ for finding the Gromov--Hausdorff distance can be inferred from the fact that it is continuous in c and flat at c=1
Predictably, the accuracy of the algorithm on either metric space type decreases once $c$ becomes large enough. The initial positive trend may be explained by the associated reduction of maximal distance misalignment in local minima on $\SSS$ bringing them closer to isometries, which in turn increases the chance to find one by projecting an approximate solution of (\ref{eqn:continuous_optimization}) onto $\RRR$. Similarly to the logic underlying Theorem \ref{thm:c_threshold}, we expect the scale of $c$ on which the above effect is manifested to be inversely proportional to the distortion gap. Note that the distortion gap on metric graphs, $\rho=1$ due to their integer distances, is orders of magnitude bigger than on point clouds even after rescaling by their diameters --- $O(\log{n})$ for the Erd\H{o}s-R\'{e}nyi graph and $\sqrt{3}$ for the unit cube. For the rescaled inputs, the average distortion gap is $0.24$ in metric graphs and $4.6 \times 10^{-17}$ in point clouds. This difference likely accounts for the contrasting optimal scales of the parameter $c$ in each case.

\subsection{Model spaces}
\label{spheres}
% Recall that any compact metric space $X$ admits a finite approximation of arbitrary precision $\varepsilon$ in the form of its epsilon-net
Recall that for any compact metric space $X$ there exists its finite $\varepsilon$-net $X_\varepsilon \subseteq X$ for any $\varepsilon > 0$. Because $$\left|\dGH(X,Y) - \dGH(X_\varepsilon,Y_\varepsilon)\right| \leq \epsilon,\quad\cite{oles2022lipschitz}$$ such an approximation enables numerical estimation of the Gromov--Hausdorff distance between infinite $X,Y$ to an arbitrary precision. This can be particularly useful for estimating the distance between model metric spaces. We demonstrate it by refining an upper bound of the Gromov--Hausdorff distance between the unit circle $S^1 \subset \R^2$ and the unit hemisphere $H^2 \subset \R^3$ (hemisphere of the unit sphere), established to be strictly below $\frac{\sqrt{3}}{2}$ in \cite{lim2021gromov}.

To generate an $\varepsilon$-net of $H^2$, we use a slight modification of the regular construction described in \cite{deserno2004generate}.
%on the upper hemisphere $\left\{(\theta, \phi): \theta \in [0, \frac{\pi}{2}], \phi \in [0, 2\pi]\right\}$.
Given some small $\delta$, we consider evenly spaced polar angles $\theta_i$ covering the hemisphere range $\left[0, \frac{\pi}{2}\right]$ so that the geodesic distance from any $\mathbf{x} = (\theta, \phi) \in H^2$ to the nearest $\yy = (\theta_i, \phi)$ is $|\theta - \theta_i| \leq \frac{\delta}{2}$. The law of cosines then bounds the Eucledian distance between $\xx$ and $\yy$ by
\begin{align*}
    \|\xx - \yy\|^2 &= 2 - 2\cos(\theta - \theta_i)\\
    &\leq 2\left(1-\cos\frac{\delta}{2}\right).
\end{align*}
%$\forall \theta \in [0, \frac{\pi}{2}] \quad \min_i |\theta - \theta_i| \leq \frac{\delta}{2}$.
For each $\theta_i$, we choose evenly spaced azimuthal angles $\phi_j(\theta_i)$ so that %$\forall \phi \in [0, 2\pi] \quad \min_j |\phi - \theta_i| \leq \frac{\delta}{2}$
the geodesic distance between any $\mathbf{y} = (\theta_i, \phi)$ and the nearest $\mathbf{z} = \left(\theta_i, \phi_j(\theta_i)\right)$ is at most $\frac{\delta}{2}$. Because $\yy$ and $\zz$ are on a circle of radius $\sin\theta_i$, this implies $\left|\phi - \phi_j(\theta_i)\right| \leq \frac{\delta}{2\sin\theta_i}$ and therefore 
\begin{align*}
    \|\yy - \zz\|^2 &= 2\sin^2\theta_i - 2\sin^2\theta_i \cos\left(\phi - \phi_j(\theta_i)\right) \\
    &\leq 2\sin^2\theta_i\left(1-\cos\frac{\delta}{2\sin\theta_i}\right).
\end{align*}
%, and place a point at each $(\theta_i, \phi_j(\theta_i))$ to construct the $\varepsilon$-net $H^2_\varepsilon$.
The $\varepsilon$-net $H^2_\varepsilon$ is comprised of the points at $\left(\theta_i, \phi_j(\theta_i)\right)$ for every $i,j$ pair. Its covering radius $\varepsilon$ can be bounded based on
\begin{align*}
    \left|\cos\angle\xx\yy\zz\right| &= \left| \langle \xx - \yy, \yy - \zz\rangle\right| \\
    &= \sin\theta_i\left|\sin\theta_i - \sin\theta\right|\left(\cos\left(\phi - \phi_j(\theta_i)\right)\right) \\
    &\leq \sin\theta_i\left(\sin\theta_i\left(1-\cos\frac{\delta}{2}\right)+\cos\theta_i\sin\frac{\delta}{2}\right)\left(1 - \cos\frac{\delta}{2\sin\theta_i}\right),
\end{align*}
which entails
\begin{align*}
    \varepsilon^2 &\leq \sup_{\xx}\|\xx-\zz\|^2 \\
    &\leq \sup_{\xx} \left(\|\xx - \yy\|^2 + \|\yy - \zz\|^2 + 2\|\xx - \yy\|\|\yy - \zz\|\left|\cos{\angle \xx\yy\zz}\right|\right) \\
    &\leq 2\left(1-\cos\frac{\delta}{2}\right) + 2 \max_i \sin^2\theta_i \left(1 - \cos\frac{\delta}{2\sin\theta_i}\right)\Bigg[1 + \\&\hspace{2cm}2\sqrt{1-\cos\frac{\delta}{2}}\sqrt{1 - \cos\frac{\delta}{2\sin\theta_i}}\left(\sin\theta_i\left(1-\cos\frac{\delta}{2}\right)+\cos\theta_i\sin\frac{\delta}{2}\right)\Bigg].
\end{align*}

\begin{figure}[h]
    \centering
    \caption{The relation between $S^1_\varepsilon$ and $H^2_\varepsilon$ induced by a mapping pair recovered by \texttt{dgh}.}% recovered by \texttt{dgh}.}
    \vspace{.25cm}
    \includegraphics[width=17cm]{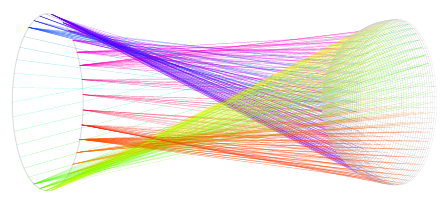}
    \label{fig:spheres}
\end{figure}

By letting $\delta=\frac{\pi}{13}$ in the above, we construct $H^2_\varepsilon$ of 437 points and $\varepsilon \approx 0.0854$. We match this covering radius on $S^1$ by constructing its $\varepsilon$-net $S^1_\varepsilon$ as a regular lattice of 37 points. Setting $c=10^{12}$ %(the best-performing value for the point clouds)
and running \texttt{dgh} with a permissive iteration budget for about 7 minutes (1255 iterations) yields a mapping pair delivering $\dGH(S^1_\varepsilon, H^2_\varepsilon) < 0.7009$%in about 7 minutes (after 1255 iterations)
, shown in Figure \ref{fig:spheres}; we note that neither of the recovered mappings is injective. It follows that $$\dGH\left(S^1, H^2\right) \leq \dGH\left(S^1_\varepsilon, H^2_\varepsilon\right) + \varepsilon < 0.7863,$$ a refinement over $\frac{\sqrt{3}}{2} \approx 0.8660.$ Trivially, this bound holds for both the open and the closed hemispheres as well as for the ``helmet'' of $S^2$ --- its upper hemisphere that contains $(0, \phi) \in S^2$ if and only if $\phi \in \left[0, \pi\right)$, a construction from \cite{lim2021gromov} facilitating antipode-preserving mappings.

% \subsection{Object recognition}
% To test the efficiency of estimating the Gromov--Hausdorff distance for object recognition, we apply our algorithm towards ModelNet10. \cite{wu20153d} The ModelNet10 dataset is comprised of 4,899 volumetric shapes from 10 classes (e.g. bathtubs, chairs, or monitors), and split 80/20 into the training and the testing sets. 

% Within the training data, we aim to infer the class of a shape as the class of its nearest neighbor from the entire testing set. We measure dissimilarity of a pair of shapes by bounding the Gromov--Hausdorff distance between the two 3-dimensional point clouds sampled from their 2-dimensional boundaries. Specifically, we sample 500 points from each shape and run the algorithm with  $c=10^7$ and the iteration budget of 100. Because classifying one shape relies on computing its Gromov--Hausdorff distance to the 908 elements of the testing set, we constrain our scope to 1000 training data objects obtained by randomly selecting 100 shapes from each class. The subsequent $0.908\times10^6$ computations of the Gromov--Hausdorff distance are distributed over 15,000 CPU cores (2016 Intel Xeon E5-2695v4) of Falcon supercomputer.

% Our algorithm achieves the classification accuracy of 90,0\%. The resulting confusion matrix is shown on Figure \ref{fig:confusion matrix}. 
\section*{Acknowledgements}
I am grateful to Oleksandr Dykhovychniy, Kostiantyn Lyman, and Kevin R. Vixie for the numerous insightful conversations that helped shape this paper. I would also like to acknowledge the convenience of an online tool for computing matrix derivatives \cite{laue2018computing} used in this study.

% \appendix
\begin{appendices}
\section{Proofs}
\label{proofs}
\cthreshold*
\begin{proof}
    Recall that $\VV - \RR\UU\RR^T + \RR\WW - \WW^T\RR^T$ is by construction symmetric with zero main diagonal, and $\dis \RR$ is the largest magnitude among its entries. Let     $\RR, \RR' \in \RRR$ be such that $\dis \RR > \dis \RR'$ and therefore $\dis \RR \geq \dis\RR' + \rho$.
    % Let $R, R' \subseteq X \times Y$ be mapping pair-induced relations s.t. $\dis R > \dis R'$ with their matrix representations denoted as $\RR, \RR' \in \RRR$, respectively. By construction, $\VV - \RR\UU\RR^T + \RR\WW - \WW^T\RR^T$ is symmetric, has zero main diagonal, and only contains entries from $[0, \dis R]$ with at least one entry equal to $\dis R$.
    Because $a \mapsto c^a + c^{-a}$ is convex and attains its minimum at $a = 0$,
    \begin{align*}
        \left\|c^{\VV - \RR\UU\RR^T + \RR\WW - \WW^T\RR^T} + c^{\RR\UU\RR^T - \VV + \WW^T\RR^T - \RR\WW}\right\|_1 &\geq 2\left(c^{\dis \RR} + c^{-\dis \RR} + (n+m)^2-2\right) \\
        &> 2\left(c^{\dis \RR} + (n+m)^2 - 2\right).
    \end{align*}
    At the same time,
    \begin{align*}
        \hspace{-.5cm}\left\|c^{\VV - \RR'\UU\RR'^T + \RR'\WW - \WW^T\RR'^T} + c^{\RR'\UU\RR'^T - \VV + \WW^T\RR'^T - \RR'\WW}\right\|_1 &\leq \left((n+m)^2 - n - m\right)\left(c^{\dis \RR'} + c^{-\dis \RR'}\right) + 2(n+m) \\ &< \left((n+m)^2-n-m\right)c^{\dis \RR'} + (n+m)^2+n+m.
    \end{align*}
    % By the definition of distortion gap, $\dis R' \leq \dis R - \epsilon$ and therefore
    Then
    \begin{align*}
        &\left\|c^{\VV - \RR\UU\RR^T + \RR\WW - \WW^T\RR^T} + c^{\RR\UU\RR^T - \VV + \WW^T\RR^T - \RR\WW}\right\|_1 - \left\|c^{\VV - \RR'\UU\RR'^T + \RR'\WW - \WW^T\RR'^T} + c^{\RR'\UU\RR'^T - \VV + \WW^T\RR'^T - \RR'\WW}\right\|_1 \\
        &\quad> 2c^{\dis \RR} - \left((n+m)^2-n-m\right)c^{\dis \RR'} + (n+m)^2-n-m-4 \\
        &\quad\geq \left(2c^\rho - (n+m)^2-n-m\right)c^{\dis \RR'}  + (n+m)^2-n-m-4 \\
        &\quad\geq (n+m)^2-n-m-4 \\
        &\quad> 0.
    \end{align*}
    It follows that the 1-norm relaxation $\left\|c^{\VV - \RR\UU\RR^T + \RR\WW - \WW^T\RR^T} + c^{\RR\UU\RR^T - \VV + \WW^T\RR^T - \RR\WW}\right\|_1$ monotonically decreases with $\dis \RR$.
\end{proof}

% Proof of Theorem \ref{thm:convex_hull}:
% \begin{proof}
%     A generalization of the Birkhoff-von Neumann theorem for row-stochastic matrices states than the set of row-stochastic matrices is the convex hull of the set of binary row-stochastic matrices, see e.g. \cite{gubin2008subgraph} or \cite{cao2022centrosymmetric}. Trivially then, the Cartesian product of the $n\times m$ and $m \times n$ row-stochastic matrices is the convex hull of the Cartesian product of their binary counterparts. Replacing every matrix pair with their direct sum in both sets concludes the proof.
% \end{proof}

\polytopegeometry*
\begin{proof}
    The bi-mapping polytope $\SSS$ is defined by the $2nm$ inequalities of the form $S_{ij} \geq 0$, $n^2+m^2$ equality constraints of the form $S_{ij} = 0$, and the requirement of unit row sums $\SS\bf{1} = \bf{1}$. In particular, the facets of $\SSS$ lie in the hyperplanes $S_{ij} = 0$ for $(i,j) \in \left(\{1,\ldots,n\}\times \{1,\ldots,m\}\right) \cup \left(\{n+1,\ldots,n+m\}\times \{m+1,\ldots,m+n\}\right)$. Because every face is the intersection of a set of facets, $\Phi$ is given by the collective indices of zero entries describing the corresponding hyperplanes.
\end{proof}

\solutiononinterior*
\begin{proof}
The statement trivially holds for $\SS^* \in \RRR$ (recall that a vertex is its own interior). Otherwise, $\exists h,k,l$ s.t. $0 < S^*_{hk}, S^*_{hl} < 1$, which means that $\SS^*$ lies on the open segment whose endpoints $\SS', \SS''$ are given by $S'_{ij} = \begin{cases}
    S^*_{hk} + S^*_{hl} & \text{if $(i,j) = (h,k)$} \\
    0 & \text{if $(i,j) = (h,l)$}\\
    S^*_{ij} & \text{otherwise}
\end{cases}$, $S''_{ij} =
\begin{cases}
    0 & \text{if $(i,j) = (h,k)$} \\
    S^*_{hk} + S^*_{hl} & \text{if $(i,j) = (h,l)$}\\
    S^*_{ij} & \text{otherwise}
\end{cases}$. By applying the trace trick, we derive the following identity:
\begin{align*}
    \left\langle \SS', \grad \obj(\SS'') \right\rangle &= 2\left\langle \SS', c^\VV\SS'' c^{-\UU} + c^{-\VV}\SS''c^\UU + \big(c^\WW\SS'' c^{-\WW} + c^{-\WW}\SS'' c^\WW\big)^T \right\rangle \\
    &= 2\left\langle \SS'', c^\VV\SS' c^{-\UU} + c^{-\VV}\SS'c^\UU + \big(c^{-\WW}\SS' c^{\WW} + c^{\WW}\SS' c^{-\WW}\big)^T \right\rangle \\
    &= \left\langle \SS'', \grad \obj(\SS') \right\rangle.
\end{align*}
Denoting $\alpha \defeq \frac{S^*_{hk}}{S^*_{hk} + S^*_{hl}} \in (0, 1)$, we get
\begin{align*}
    \obj(\SS^*) &= \obj\left(\alpha \SS' + (1-\alpha)\SS''\right) \\
    &= \Big\langle \alpha \SS' + (1-\alpha)\SS'', \frac{1}{2}\grad\obj\left(\alpha \SS' + (1-\alpha)\SS''\right)\Big\rangle \\
    &= \alpha^2\obj(\SS') + \alpha(1-\alpha)\left\langle \SS'', \grad \obj(\SS') \right\rangle + (1-\alpha)^2\obj(\SS'') \\
    &\leq (1 - 2\alpha(1-\alpha)\obj(\SS^*) + \alpha(1-\alpha)\left\langle \SS'', \grad \obj(\SS') \right\rangle,
\end{align*} which implies $$\left\langle \SS'', \grad \obj(\SS') \right\rangle \leq 2 \obj(\SS^*) \leq \left\langle \SS', \grad \obj(\SS')\right\rangle$$ and therefore
\begin{equation*}
    \grad\obj(\SS')_{hl} \leq \grad\obj(\SS')_{hk}. \tag{1}
\end{equation*}
By the analogous reasoning using $\left\langle \SS', \grad \obj(\SS'') \right\rangle$ in place of $\left\langle \SS'', \grad \obj(\SS') \right\rangle$,
\begin{equation*}
    \grad\obj(\SS'')_{hk} \leq \grad\obj(\SS'')_{hl}.    \tag{2}
\end{equation*}
At the same time, %$\SS' - \SS'' = \begin{bmatrix}    \ddots & &  \end{bmatrix}$

\begin{align*}
    \grad\obj(\SS')_{ij} - \grad\obj(\SS'')_{ij} &= \grad\obj(\SS' - \SS'')_{ij} \\
    &= 2\big(c^\VV(\SS'-\SS'')c^{-\UU} + c^{-\VV}(\SS'-\SS'')c^\UU\big)_{ij} \\
    &\hspace{2.5cm}+ \big(c^\WW(\SS'-\SS'')c^{-\WW} + c^{-\WW}(\SS'-\SS'')c^\WW\big)_{ji} \\
    &= 2\begin{aligned}[t](S^*_{hk} + S^*_{hl})&\big(c^{V_{ih}}(c^{-U_{kj}} - c^{-U_{lj}}) + c^{-V_{ih}}(c^{U_{kj}} - c^{U_{lj}}) \\ &+ c^{W_{jh}}(c^{-W_{ki}} - c^{-W_{li}}) + c^{-W_{jh}}(c^{W_{ki}} - c^{W_{li}})\big),
    \end{aligned}
\end{align*}
and in particular
\begin{equation*}
\begin{aligned}
    \grad\obj(\SS')_{hk} - \grad\obj(\SS'')_{hk} &= 2(S^*_{hk} + S^*_{hl})\big(2 - c^{-U_{lk}} - c^{U_{lk}} + 2 - c^{W_{kh}}c^{-W_{lh}} - c^{-W_{kh}}c^{W_{lh}}\big) \\ &\leq 0,
\end{aligned} \tag{3}
\end{equation*}
\begin{equation*}
\begin{aligned}
        \grad\obj(\SS')_{hl} - \grad\obj(\SS'')_{hl} &= 2(S^*_{hk} + S^*_{hl})\big(c^{-U_{kl}} + c^{U_{kl}} - 2 + c^{W_{lh}}c^{-W_{kh}} + c^{-W_{lh}}c^{W_{kh}} - 2\big) \\ &\geq 0. 
\end{aligned}\tag{4}
\end{equation*}
Combining inequalities (1), (2), (3), and (4) gives $\grad\obj(\SS')_{hk} = \grad\obj(\SS')_{hl}$. Then $$\left\langle \SS'', \grad \obj(\SS')\right\rangle = \left\langle \SS', \grad \obj(\SS') \right\rangle = 2 \obj(\SS^*)$$ and therefore $\SS' \in \argmin_{\SS \in \SSS}\obj(\SS)$.

We showed that for any $(h, k)$ s.t. $0 < S^*_{hk} < 1$ there exists a solution $\SS' \in \SSS$ s.t. $S'_{ij} = 0$ whenever $S^*_{ij}=0$ or $(i,j)=(h,k)$. By Lemma \ref{lem:polytope_geometry}, it means that every facet of $\Phi$ contains a solution on its interior. Recursively repeating this argument establishes the existence of a solution in the interior of every non-empty face of $\Phi$, which also implies that every vertex of $\Phi$ is a solution.

Since $\obj$ is quadratic, attaining the same value at any three points on the same line renders the function constant on the entire line. In particular, any face $\Psi$ is included in $\argmin_{\SS\in\SSS}\obj(\SS)$ provided that $$\partial\Psi \cup \{\SS^\mathrm{o}\} \subseteq \argmin_{\SS\in\SSS}\obj(\SS)$$ for some interior point $\SS^\mathrm{o} \in \Psi\setminus\partial\Psi$. It follows that any positive-dimension face of $\Phi$ must be a part of the solution set if all the faces of $\Phi$ of dimension smaller by 1 are. Starting from the vertices of $\Phi$ and applying induction then yields $\Phi \subseteq \argmin_{\SS\in\SSS}\obj(\SS)$.
\end{proof}

\voronoiface*
\begin{proof}
    Note that $$\|\SS - \RR\|_2^2 = \|\SS\|_2^2 - 2\langle \SS, \RR \rangle + n+m\quad\forall\RR \in \RRR$$ and therefore $$\proj_\RRR \SS \in \argmin_{\RR \in \RRR}\,\|\SS - \RR\|_2 = \argmax_{\RR \in \RRR} \;\langle \SS, \RR \rangle.$$
    Let $s_i$ denote the maximum entry in the $i$-th row of $\SS$ for $i=1,\ldots,n+m$. Then $$\max_{\RR \in \RRR} \;\langle \SS, \RR \rangle = \sum_{i=1}^n s_i,$$ which means that any $\RR^* \in \argmax_{\RR \in \RRR} \;\langle \SS, \RR \rangle$ can have $R^*_{ij} \neq 0$ only if $S_{ij} = s_i$. Because $\SS$ is a convex combination of the vertices of $\Phi$, $S_{ij} = s_i > 0$ in turn implies the existence of $\RR \in \Phi \cap \RRR$ s.t. $R_{ij} \neq 0$. By Lemma \ref{lem:polytope_geometry}, the index of every non-zero entry of $\RR^*$ is not contained in the index set of zero entries characterizing $\Phi$, and therefore $\RR^* \in \Phi$.
\end{proof}

% \guarantees*
% \begin{proof}
%     By ..., the solution set of $\min_{\SS\in\SSS}\obj(\SS)$ is closed under inclusion 
%     By the corollary to Theorem \ref{thm:c_threshold}, $$\argmin_{\RR\in\RRR} \obj(\RRR) \subseteq \argmin_{\RR\in\RRR}\dis\RR.$$ It remains
% \end{proof}
%     \label{thm:guarantees}
%     Let $\proj_\RRR: \SSS \to \RRR$ denote the projection of the bi-mapping polytope onto the set of its vertices. If $c \geq \left(\frac{(n+m)^2-n-m}{2}\right)^{1/\rho}$, then $$\proj_\RRR \left(\argmin_{\SS\in\SSS} \obj(\SSS)\right) \subseteq \argmin_{\RR\in\RRR}\dis\RR.$$

% Proof of Theorem \ref{thm:time_complexity}:
% \begin{proof}
%     The time complexity of a single Frank--Wolfe iteration is defined by the calculations of gradient at the current point, the direction of smallest correlation with it, and the step size to take along said direction. Computing the gradient $\grad \obj(\SS_i)$ is comprised of multiplying and summing $(n+m)\times(n+m)$ matrices which has the time complexity of $O(n^3)$. Finding the descent direction $\DD_i$ amounts to solving $\min_{\RR\in\RRR}\langle \RR, \grad\obj(\SS_i) \rangle$, which can be done in $O(n^2)$ time by locating the smallest entry in each row of $\grad\obj(\SS_i)$. Finally, the quadratic minimization step $\min_{\gamma \in [0, 1]} \obj(\gamma\SS_i + (1-\gamma)\DD_i)$ admits a closed-form solution that requires $O(n^3)$-time computations including of $\ob2j(\DD_i)$.
% \end{proof}

\oneminimizer*
\begin{proof}
    $\left|\argmin_{\SS\in\SSS} \left\langle \SS, \grad\obj(\SS^*) \right\rangle\right| = 1$ if and only if the smallest entry in each row of $\grad\obj(\SS^*)$ is unique.  In order for $\grad\obj(\SS^*)_{hk} = \grad\obj(\SS^*)_{hl}$ to hold for some $h$ and $k\neq l$, the realizations $d_1, \ldots, d_\frac{n(n-1)}{2}$ of the distances in $X$ must satisfy
    $$\sum_{i=1}^{n+m}\sum_{j=1}^{n+m} \left(c^{U_{hi}}S_{ij}c^{-V_{jk}} + c^{-U_{hi}}S_{ij}c^{V_{jk}} +c^{-W_{kj}}S_{ji}c^{W_{ih}} + c^{W_{kj}}S_{ji}c^{-W_{ih}}\right) = 0,$$
    which can be rewritten as $$\sum_{i=0}^\frac{n(n-1)}{2}\sum_{j=0}^\frac{n(n-1)}{2}a_{ij}c^{d_i - d_j} = 0$$ for some $a_{ij} \in \R$ and $d_0 \defeq 0$. Because the left-hand side can be cast as a generalized polynomial through a change of variables, it must have a finite number of solutions. The probability of the distances in $X$ to form a particular solution $d_1^*, \ldots, d_\frac{n(n-1)}{2}^*$ is
    \begin{align*}
        &\mathbb{P}_\XX\left[\mathrm{D}_1=d^*_1,\ldots,\mathrm{D}_\frac{n(n-1)}{2}=d_\frac{n(n-1)}{2}^*\right] \\ &\hspace{2cm} = \mathbb{P}_\XX\left[\mathrm{D}_1=d^*_1|\mathrm{D}_2=d_2^*,\ldots,\mathrm{D}_\frac{n(n-1)}{2}=d_\frac{n(n-1)}{2}^*\right]\mathbb{P}_\XX\left[\mathrm{D}_2=d_2^*,\ldots,\mathrm{D}_\frac{n(n-1)}{2}=d_\frac{n(n-1)}{2}^*\right] \\
        &\hspace{2cm} = 0\cdot\mathbb{P}_\XX\left[\mathrm{D}_2=d_2^*,\ldots,\mathrm{D}_\frac{n(n-1)}{2}=d_\frac{n(n-1)}{2}^*\right]
        \\
        &\hspace{2cm} = 0,
    \end{align*} and therefore
    $\mathbb{P}_\XX\left[\grad\obj(\SS^*)_{hk} = \grad\obj(\SS^*)_{hl}\right] = 0.$
    Then
    \begin{align*}
        \mathbb{P}_\XX\left[\Big|\argmin_{\SS\in\SSS} \langle \SS, \grad\obj(\SS^*) \rangle\Big| > 1\right] &\leq \sum_{h=1}^{n+m}\sum_{k=1}^{n+m}\sum_{l=k+1}^{n+m} \mathbb{P}_\XX\left[\grad\obj(\SS^*)_{hk} = \grad\obj(\SS^*)_{hl}\right] \\ &= 0.
    \end{align*}
\end{proof}

\nonconvexity*
\begin{proof}
    Let $\lmax = \lambda_1 \geq \ldots \geq \lambda_{(n+m)^2}$ denote the eigenvalues of $\HH$. We already established using the Courant-Fischer theorem that
    \begin{align*}
        \lambda_1 &\geq \frac{\|\HH\|_1}{(n+m)^2} \\
        &= \frac{2}{(n+m)^2}(\big\|c^\UU\big\|_1\big\|c^{-\VV}\big\|_1 + \big\|c^{-\UU}\big\|_1\big\|c^{\VV}\big\|_1 + \big\|c^\WW\big\|_1\big\|c^{-\WW^T}\big\|_1 + \big\|c^{-\WW}\big\|_1\big\|c^{\WW^T}\big\|_1) \\
        &= \frac{8}{(n+m)^2}\big\|c^\WW\big\|_1\big\|c^{-\WW}\big\|_1.
    \end{align*}
    To bound $\lambda_{-}\defeq\sum_{\lambda_i < 0}|\lambda_i|$ from above%the total magnitude of eigenvalues
    , we will first introduce two new notations for convenience. For any two matrices $\mathbf{A}$ and $\mathbf{B}$, let $\mathbf{A} \ominus \mathbf{B}$ denote a matrix operation analogous to the Kronecker product but with subtraction in place of multiplication. In addition, define $\ch:\R\to\R$ as $\ch(a) \defeq c^a + c^{-a}$ and its entrywise counterpart for matrices $\ch:\R^{p\times q}\to\R^{p\times q}$, so that we are able to compactly write
    \begin{align*}
    \frac{1}{2}\HH &= \ch(\UU\ominus\VV) + \ch(\WW\ominus\WW^T)\KK \\
    &= \ch(\UU\ominus\VV) + \ch\big((\WW\ominus\WW^T)\KK\big).
    \end{align*}
    Note that $\big\langle \ch(\mathbf{A}), \ch(\mathbf{B}) \big\rangle = \big\|\ch(\mathbf{A}+\mathbf{B})\big\|_1 + \big\|\ch(\mathbf{A}-\mathbf{B})\big\|_1$. Furthermore, if $\|\mathbf{A}\|_1$ and $\|\mathbf{A}\|_\infty$ are fixed, $\big\|\ch(\mathbf{A})\big\|_1$ is maximized by the highest possible count of entries of $\mathbf{A}$ equal to $\big\|\mathbf{A}\big\|_\infty$ due to the superadditivity of $\ch$ (see also a proof based on Lagrange multipliers under ``\href{https://math.stackexchange.com/questions/1355638/upper-bound-on-sum-of-exponential-functions}{upper bound on sum of exponential functions}'' on Mathematics Stack Exchange). As a consequence, for $\mathbf{A} \in \R^{p\times q}$
    \begin{align*}
        \big\|\ch(\mathbf{A})\big\|_1 &\leq \left\lceil\frac{\|\mathbf{A}\|_1}{\|\mathbf{A}\|_\infty}\right\rceil\ch\big(\|\mathbf{A}\|_\infty\big) + \left\lfloor pq - \frac{\|\mathbf{A}\|_1}{\|\mathbf{A}\|_\infty}\right\rfloor \ch(0)\\
        &\leq \left(\frac{\|\mathbf{A}\|_1}{\|\mathbf{A}\|_\infty} + 1\right)\left(\ch\big(\|\mathbf{A}\|_\infty\big) - 2\right) + 2pq \\
        &\leq \left(\sqrt{pq}\frac{\|\mathbf{A}\|_2}{\|\mathbf{A}\|_\infty} + 1\right)\left(\ch\big(\|\mathbf{A}\|_\infty\big) - 2\right) + 2pq.
    \end{align*}
    The looser bound in terms of the Frobenius norm allows for its tractable computation when $\mathbf{A} = \mathbf{B}\ominus\mathbf{C}$ for some $\mathbf{B} \in \R^{p\times q}, \mathbf{C} \in \R^{r \times s}$ with non-negative entries, as then
    $$\|\mathbf{A}\|_2^2 = \|\mathbf{B}\ominus\mathbf{C}\|_2^2 = pq\|\mathbf{B}\|_2^2 + rs\|\mathbf{C}\|_2^2 - 2\|\mathbf{B}\|_1\|\mathbf{C}\|_1.$$
    
    Next, we bound the eigenvalues' total magnitude from above. Using the symmetry of $\HH$,
    \begin{align*}
        \hspace{-1.5cm}\frac{1}{4}\sum_{i=1}^{(n+m)^2}\lambda_i^2 &= \left\|\frac{1}{2}\HH\right\|_2^2 \\
        &= \big\|\ch(\UU\ominus\VV)\big\|_2^2 + \big\|\ch(\WW\ominus\WW^T)\big\|_2^2 + 2\Big\langle \ch(\UU\ominus\VV), \ch\big((\WW\ominus\WW^T)\KK\big)\Big\rangle \\
        &= \big\|\ch(2\UU\ominus2\VV)\big\|_1 + \big\|\ch(2\WW\ominus2\WW^T)\big\|_1 + 4(n+m)^4
        \\&\hspace{1cm}+ 
        % \begin{aligned}
        2\Big\|\ch\big(\UU\ominus\VV + (\WW\ominus\WW^T)\KK\big)\Big\|_1 +2\Big\|\ch\big(\UU\ominus\VV - (\WW\ominus\WW^T)\KK\big)\Big\|_  1 \\
        %&< ((n+m)^2\frac{\|\UU\ominus\VV\|_2 + \|\WW\ominus\WW^T\|_2}{\dmax}+2)(\ch(2\dmax) - 2) + 4(n+m)^4 + 4(n+m)^2 \\
        %&\hspace{1cm}+ ((n+m)^2\frac{\|\UU\ominus\VV + (\WW\ominus\WW^T)\KK)\|_2 + \|\ldots\|_2}{\dmax} + 4)(\ch(2\dmax) - 2) + 8(n+m)^4 \\
        &\leq \Bigg((n+m)^2\frac{\|\UU\ominus\VV\|_2 + \|\WW\ominus\WW^T\|_2 + \big\|\UU\ominus\VV + (\WW\ominus\WW^T)\KK\big\|_2 + \big\|\UU\ominus\VV - (\WW\ominus\WW^T)\KK\big\|_2}{\dmax}
        \\
        &\hspace{1cm}+6\Bigg)\big(\ch(2\dmax) - 2\big) + 16(n+m)^4.
        % \end{aligned}
        % &= 4(\|c^\WW \otimes c^{-\WW^T} + c^{-\WW} \otimes c^{\WW^T}\|_2^2 + \|c^\UU \otimes c^{-\VV} + c^{-\UU} \otimes c^{\VV}\|_2^2 \\ &\hspace{2cm} +  \langle(c^\WW \otimes c^{-\WW^T} + c^{-\WW} \otimes c^{\WW^T})\KK, c^\UU \otimes c^{-\VV} + c^{-\UU} \otimes c^{\VV}\rangle) \\
        % &= 4(\sum_{i,j,h,k=1}^{n+m}((c^{W_{ij}-W_{hk}} + c^{W_{hk}-W_{ij}})^2 + (c^{U_{ij}-V_{hk}} + c^{V_{hk}-U_{ij}})^2) + ).
    \end{align*}
    From
    \begin{align*}
        \big\|\UU\ominus\VV \pm (\WW\ominus\WW^T)\KK\big\|_2 &= \sqrt{\|\UU\ominus\VV\|_2^2 + \|\WW\ominus\WW^T\|_2^2 \pm 2\big\langle(\WW\ominus\WW^T)\KK, \UU\ominus\VV\big\rangle}
    \end{align*}
    and \begin{align*}
        \big\langle(\WW\ominus\WW^T)\KK, \UU\ominus\VV\big\rangle &= \sum_{i,j,h,k=1}^{n+m}(W_{ik} - W_{jh})(U_{ij} -V_{hk}) \\ &=\sum_{i,j,h,k=1}^{n+m}(W_{ik}U_{ij} - W_{jh}U_{ji} - W_{ik}V_{hk} + W_{jh}V_{kh}) \\
        &= (n+m)\sum_{i,j,k=1}^{n+m}(W_{ij}U_{ik} - W_{ij}U_{ik} + W_{ji}V_{ki} - W_{ji}V_{ki}) \\
        &= 0,
    \end{align*}
    it follows that
    \begin{align*}
        \frac{1}{4}\sum_{i=1}^{(n+m)^2}\lambda_i^2
        &\leq \Bigg((n+m)^2\frac{\|\UU\ominus\VV\|_2 + \|\WW\ominus\WW^T\|_2 + 2\sqrt{\|\UU\ominus\VV\|_2^2 + \|\WW\ominus\WW^T\|_2^2}}{\dmax}
        \\
        &\hspace{1cm}+6\Bigg)\big(\ch(2\dmax) - 2\big) + 16(n+m)^4 \\
        &= \left((n+m)^2\frac{(2\sqrt{2} + 4)\sqrt{(n+m)^2\|\WW\|_2^2 - \|\WW\|_1^2}}{\dmax}+6\right)\big(\ch(2\dmax) - 2\big) + 16(n+m)^4 \\
        &= \pmax\big(\ch(2\dmax) - 2\big) + 16(n+m)^4
    \end{align*}
    and therefore
    \begin{align*}
        \lambda_- &\leq \sum_{i=2}^{(n+m)^2}|\lambda_i| \\
        &\leq \sqrt{\big((n+m)^2 - 1\big)\left(-\lambda_1^2 + \sum_{i=1}^{(n+m)^2}\lambda_i^2\right)} \\
        &\leq 2(n+m)\sqrt{16(n+m)^4 + \pmax\big(\ch(2\dmax) -2\big) - \frac{16}{(n+m)^4}\big\|c^\WW\big\|_1^2\big\|c^{-\WW}\big\|_1^2} \\
        &= 4(n+m)^2\frac{\alpha}{\frac{1}{2} - \alpha}.
    \end{align*}
    Finally,
    \begin{align*}
        \nconv(\sigma) &= \frac{\lambda_-}{2\lambda_- + 8(n+m)^2} \\
        &= \frac{1}{2} - \frac{2(n+m)^2}{\lambda_- + 4(n+m)^2} \\
        &\leq \frac{1}{2} - \frac{1}{\frac{2\alpha}{\nicefrac{1}{2}-\alpha} + 2} \\
        &= \alpha.
    \end{align*}
\end{proof}

\end{appendices}

\bibliographystyle{alpha}
\bibliography{references/references.bib}

\end{document}